\documentclass[]{llncs}
\sloppy
\usepackage{xcolor}
\usepackage[T2A]{fontenc}		
\usepackage[utf8]{inputenc}			
\usepackage[english]{babel}	

\usepackage{amsmath,amsfonts,amssymb,mathtools}

\usepackage{wasysym}
\usepackage{amscd}
\usepackage{comment}
\usepackage{listings}
\usepackage{multirow}
\usepackage{amsmath}
\usepackage{graphicx}
\usepackage{wrapfig}
\usepackage{subcaption}
\usepackage{orcidlink}

\newcommand{\tool}{{\textsf{Trans}}\xspace}

\def\AA{\mathcal{A}}

\def\DD{\mathcal{E}}

\def\II{\mathcal{I}}

\def\TT{\mathcal{T}}
\def\JJ{\mathcal{J}}

\def\NN{\mathbb{N}}
\def\lim{{\varprojlim}}

\def\->{\rightarrow}

\usepackage{graphicx}
\usepackage{paralist}
\usepackage{float}
\floatstyle{boxed} 
\restylefloat{figure}
\usepackage{xcolor}
\usepackage{algorithmic}
\usepackage{lipsum}
\usepackage{multicol}
\usepackage{tikz}

\usepackage{listings}

\usepackage{amsmath}

\usepackage{amssymb,amsfonts}

\usepackage{multirow}
\usepackage{boldline}
\usepackage{xspace}
\usepackage{cleveref}

\usepackage{verbatim}  

\pagestyle{plain}

\begin{document} 

\newcommand{\mc}[1]{{\color{blue}[MC: #1]}}
\newcommand{\og}[1]{{\color{red}[OG: #1]}}

\newcommand{\pred}{p}
\newcommand{\predNL}{p_1,\ldots,p_k}
\newcommand{\predd}{q}
\newcommand{\preddd}{r}
\newcommand{\Preds}{\mathcal{P}}
\newcommand{\Interp}{\mathcal{I}}
\newcommand{\Interpprime}{\mathcal{{I'}}}
\newcommand{\Refut}{\mathcal{R}}
\newcommand{\CHC}{\pi}
\newcommand{\CHCset}{\Pi}
\newcommand{\CHCsubset}{\Delta}
\newcommand{\CHCsubsetprime}{{\Delta'}}
\newcommand{\nextCHCsubset}{\delta}
\newcommand{\CHCtriple}[3]{\langle \{#1\},#2,#3\rangle}
\newcommand{\CHCfact}[2]{\langle \emptyset,#1,#2\rangle}
\newcommand{\CHCrule}{\CHCtriple{\pred}{\varphi}{\predd}}
\newcommand{\CHCruleNL}{\CHCtriple{\predNL}{\varphi}{\predd}}
\newcommand{\CHCquery}{\CHCtriple{\pred}{\varphi}{\bot}}
\newcommand{\CHCqueryNL}{\CHCtriple{\predNL}{\varphi}{\bot}}

\newcommand{\Graph}{G_{\CHCset}}
\newcommand{\vertex}{v_{\pred}}
\newcommand{\vinit}{v_{init}}
\newcommand{\verr}{v_{err}}
\newcommand{\edge}{e_{\CHC}}
\newcommand{\weights}{w}
\newcommand{\CHCvec}{(\CHC_1,\dots,\CHC_n)}
\newcommand{\topo}{\cal{TO}}
\newcommand{\wtopo}{\cal{WTO}}
\newcommand{\ewtopo}{\cal{EWTO}}

\newcommand{\tools}{\textsc{LH}}
\newcommand{\spacer}{{\textsc{Spacer}}\xspace}
\newcommand{\eldarica}{{\textsc{Eldarica}}\xspace}

\newcommand{\Solve}{\mathtt{Solve}}
\newcommand{\InitClauses}{\mathtt{\Select}}
\newcommand{\NextClauses}{\mathtt{\Select}}
\newcommand{\Select}{\mathtt{Select}}
\newcommand{\MkRuleSat}{\Amend}
\newcommand{\Amend}{\mathtt{Amend}}
\newcommand{\AddQ}{\mathtt{Add}}
\newcommand{\PopQ}{\mathtt{Pop}}
\newcommand{\foo}{\mathtt{foo}}
\newcommand{\Shortest}{\mathtt{ShortNt}}

\newcommand{\Queue}{\mathcal{Q}}

\title{$CTL^*$ Verification and Synthesis using\\ 
Existential Horn Clauses} 

\author{Mishel Carelli\,\orcidlink{0009-0000-2181-8205} and Orna Grumberg\,\orcidlink{0009-0005-9682-3312}}
\institute{Technion - Israel Institute of Technology}
\date{}

%

\maketitle

\begin{abstract}
This work proposes a novel approach for automatic verification and synthesis of infinite-state reactive programs with respect to $CTL^*$ specifications, based on translation to Existential Horn Clauses (EHCs). 

$CTL^*$ is a powerful temporal logic, 
which subsumes the temporal logics LTL and CTL, both widely used in specification, verification, and synthesis of complex systems.

EHCs with its solver E-HSF,  
is an extension of Constrained Horn Clauses, which includes
existential quantification as well as the power of handling well-foundedness. 

We develop the translation system $\tool$, which given a verification problem consisting of a program $P$ and a specification $\phi$, builds a set of EHCs which is satisfiable iff  $P$ satisfies $\phi$.
We also develop a synthesis algorithm that given a program with holes in conditions and assignments, fills the holes so that the synthesized program satisfies the given $CTL^*$ specification.

We prove that our verification and synthesis algorithms are both sound and relative complete. Finally, we present case studies to demonstrate the applicability of our algorithms for $CTL^*$ verification and synthesis.

\end{abstract}

\section{Introduction}
This work proposes a novel approach for automatic verification and synthesis of reactive infinite-state programs with respect to $CTL^*$ specifications, based on translation to Existential Horn Clauses (EHC). 

CTL$^*$ is a powerful temporal logic, 
which subsumes the temporal logics LTL and CTL, both widely used in specification, verification, and synthesis of complex systems in industry and in academic research.
In addition to specifying the behavior of a system over time, CTL$^*$ is capable of describing the system's branching structure.

Constrained Horn Clauses (CHCs)  is a fragment of first-order logic which has been recently  highly successful in automated verification of infinite-state programs with respect to safety  properties~\cite{DBLP:conf/sas/BjornerMR13,Bjørner2015}. CHCs include uninterpreted predicates that represent, for instance, loop invariants within the checked program. CHCs are accompanied with  powerful solvers that can determine their satisfiability. That is, they can find an appropriate interpretation for the uninterpreted predicates, so that the CHCs are satisfied. Given a verification problem consisting of a program and a safety property, the problem can be translated to a set of CHCs, which is satisfiable iff the program satisfies the property.

While being successful in proving safety properties, CHCs lack the power of proving  properties such as termination and general liveness  or branching temporal properties. For the former, a notion of well-foundedness is needed. For the latter, an existential quantifier is needed for choosing one of several branches.

\emph{Existential Horn Clauses (EHCs)} 
with their solver E-HSF~\cite{10.1007/978-3-642-39799-8_61},  
is an extension of CHCs, which include
existential quantification as well as the power of handling well-foundedness.

Our work is built on two pillars. One, is the seminal paper by Kesten and Pnueli~\cite{compositional}, which establishes a compositional approach to  deductive verification of First-Order $CTL^*$. 
The other is 
the EHC logic with its solver E-HSF~\cite{10.1007/978-3-642-39799-8_61} applicable to linear programs, which together provides a strong formal setting for verification problems that can be determined by EHC satisfiability.    

The compositional approach to $CTL^*$ in~\cite{compositional} provides proof rules for basic (assertional) state formulas of the form $Q c$, where $Q$ is a path quantifier and $c$ is an assertion. Formulas including temporal operators are reduced to the basic form by applying rules for successive eliminations of temporal operators. The reductions involve building program-like (testers) structures that are later composed with each other and with the verified program.

Our work adopts the idea of transforming a specification to basic state formulas while eliminating temporal operators. However, it takes a significantly different approach. We develop a translation system $\tool$, which given a verification problem consisting of a program $P$ and a specification $\phi$, builds a set of EHCs which is satisfiable iff  $P$ satisfies $\phi$. $\tool$ consists of 8 translation rules. The first replaces each basic subformula in $\phi$ with a boolean variable. The second replaces non-fair path quantifiers $Q$ with fair path quantifiers $Q_f$. The next three rules eliminate temporal path subformulas of the form $Next$ ($Xc$), $Globally$ ($Gx$), and $Until$ ($c_1 U c_2$),  using fresh boolean variables and  fresh uninterpreted predicates. Additional constraints require that the program satisfies the appropriate temporal property.

It should be noted, that our programs include fairness conditions that restrict the paths under consideration. This is because, the elimination of both $Globally$ and  $Until$ require such conditions.
Moreover, handling fairness as part of the program and not as part of a specification is much more efficient. 
Typically, the program is large and the specification is relatively small. However, 
specifications describing program fairness might be of size of the program.

The sixth and seventh translation rules in $\tool$ deviate the most from~\cite{compositional}. 
Rule $6$ handles the verification of $A_f c$, i.e., along every fair path, assertion $c$ holds. For that purpose, the translation imposes the requirement that if a path does not satisfy $c$, then it is not fair. Well-foundedness is needed for proving this fact. Rule $7$ handles $E_f c$, i.e., for every initial state there is a fair path starting at it. Here we need both well-foundedness and an existential quantification, both supplied by EHCs.  In our translation rules we also use negations and disjunctions that are not allowed by CHCs, but are expressible by EHCs.

We prove that our translation $\tool$ is sound and relative complete.

Next, we turn to exploiting our approach for synthesis. We are given a \emph{partial program},  in which some of the conditions (e.g. in while loops) and assignments are missing (\emph{holes}). We are also given a $CTL^*$ specification. Our goal is to synthesize a ``filling'' to the holes, so that the synthesized program satisfies the $CTL^*$ specification.

We obtain this by first filling each hole with an uninterpreted predicate. Additional constraints are added to guarantee that predicates associated with a condition or an assignment indeed behave accordingly.  
Finally, we build EHCs to describe the new program and send it to the EHC solver, which will return interpreted predicates to serve as hole fillings.
Here again we prove the soundness and relative completeness of our approach.

Finally, we present case studies to demonstrate the applicability of our algorithms for $CTL^*$ verification and synthesis.

\vspace{-0.2cm}
\subsubsection{Related work}
 
Several works~\cite{beyene2014ctlfo,DBLP:journals/corr/BeyenePR16}  exploit the approach presented in~\cite{compositional} to perform CTL+FO verification via EHC-solving. We significantly extend their result to perform full $CTL^*$ verification. We also show that EHCs can be further exploited for $CTL^*$ synthesis.

In~\cite{DBLP:conf/popl/BeyeneCPR14}, EHCs are used to solve reachability, safety and LTL games on infinite graphs. This tool can perform $CTL^*$ verification, since $CTL^*$ can be expressed through $\mu$-calculus~\cite{DAM199477}, $\mu$-calculus verification can be performed through parity-games~\cite{niwinski1996games}, which are a special case of LTL games. 
However, the translation of $CTL^*$ to  $\mu$-calculus is doubly exponential~\cite{DAM199477}. As a result, a direct model checking of $CTL^*$ according to our approach is more efficient than first translating and then solving the resulting LTL game.


In~\cite{Rothenberg2023}, condition synthesis is performed, given a program with holes in conditions and specifications as assertions in the code. The program behaviors are modeled ``backwards'' using CHCs.  The CHC solver infers conditions so that the synthesized  program satisfies the assertions.
In our work we synthesize both conditions and assignments with respect to full $CTL^*$. Program behaviors are modeled ``forward''. However, we use EHCs rather than the more standard modeling via CHCs to model the problem.

Many works use synthesis to fill holes in partial program. Sketch~\cite{10.1145/1168919.1168907,guo2022learning} synthesizes programs to satisfy input-output specifications. In template-based 
synthesis~\cite{srivastava2012template-based} the specification is in the form of assertions in the code. Overview on synthesis and on synthesis of reactive programs can be found in~\cite{PGL-010} and~\cite{inbook}.
Condition synthesis can also be used to \emph{repair} programs~\cite{7985681,inproceedings,10.1109/TSE.2016.2560811,Rothenberg2023}.

Termination has been proved in~\cite{10.1007/11817963_37}.
Several works are able to synthesize \emph{finite-state} full systems for \emph{propositional} CTL$^*$ specifications.
\cite{DBLP:journals/jacm/KupfermanVW00} translates propositional CTL$^*$ formulas to hesitant alternating automata.
%
\cite{DBLP:conf/cav/KhalimovB17} proposes  bounded synthesis for CTL$^*$. The paper bounds the number of states in the synthesized system to some $k$. The bound is increased if no such system exists. Clearly, such an algorithm is complete only if the goal of the synthesis is a finite-state system. Moreover, it cannot determine unrealizability.
\cite{Bloem_2017} shows how to apply CTL$^*$ synthesis via LTL synthesis. Both logics are propositional.
It exploits translation to a hesitant automaton, which is then sent to an SMT solver.

In contrast, our work handles First-Order CTL$^*$ specifications. It can synthesize infinite-state systems and can determine unrealizability. However, we are given as input a partial (sketch-like) program, from which we synthesize a full program.

Another type of Horn-like clauses, very similar to EHCs, called pfwCSP, has been introduced in~\cite{10.1007/978-3-030-81685-8_35}.
pfwCSP does not allow existential quantification in the heads of clauses but allows disjunctions and can require that an uninterpreted predicate behaves like a function. Instead of disjunctively well-founded clauses, pfwCSP can require well-foundedness.

The authors of \cite{10.1007/978-3-030-81685-8_35} present a fully automated CEGIS-based method for solving pfwCSP and implement it for the theory of quantifier-free linear arithmetic.
Furthermore, the authors describe how relational verification concerning different relational verification problems, including k-safety, co-termination, termination-sensitive non-interference, and generalized non-interference for infinite-state programs, can be performed using pfwCSP.

The authors conjecture that EHCs and pfwCSP are inter-reducible.

In future work, our rules can be redesigned to reduce $CTL^*$ verification to pfwCSP satisfiability by replacing the only occurrence of the existential quantifier in rule~\ref{rule7} with a Skolem function using a functional predicate and substituting disjunctive well-foundedness with well-foundedness.

Another use of pfwCSP is presented in~\cite{10.1145/3571265}. The authors define a first-order fixpoint logic with background theories called $\mu$CLP and reduce the validity checking of $\mu$CLP to pfwCSP solving. Moreover, they present a modular primal-dual method that improves the solving of the pfwCSP, generated for $\mu$CLP formulas.

The authors mention that several problems, such as LTL, CTL, and even full $\mu$-calculus model checking, can be encoded using $\mu$CLP for infinite-state systems in a sound and complete manner. Similar techniques were used in \cite{10.1007/978-3-030-81685-8_35} to perform $\mu$-calculus verification in a sound but incomplete way.

Another sound but incomplete procedure for $CTL^*$ verification of infinite-state systems has been presented in \cite{10.1007/978-3-319-21690-4_2}.

\section{Preliminaries}

\subsection{Forall-exists Horn-like clauses (EHC)}\label{sectionEHC}

Let $\TT$ be a first-order theory over signature $\Sigma$. 
Then,
\emph{Forall-exists Horn-like clauses}, also called \emph{Existential Horn Clauses} (EHCs), are first-order formulas of two types: implications and $dwf$-clauses.
\begin{enumerate}

\item

Implications are formulas with uninterpreted predicates of the form:

$$c(v_0)\wedge q_1(v_1)\wedge \dots \wedge q_n(v_n) \-> \exists w :b(w_0) \wedge p_1(w_1) \wedge\dots  p_m(w_m)$$
where $q_1,\ldots,q_n, p_1,\ldots,p_m$ are uninterpreted predicate symbols, not included in $\Sigma$; $v_0\dots v_n,w_0,\dots, w_m$ are tuples of variables, not necessarily disjoint; $c$ and $b$  are quantifier-free formulas from $\TT$ over variables $v_0$ and $w_0$, respectively, and $w\subseteq \bigcup_{i=0}^m w_i$. 
 
The left-hand side of the clause is called the \emph{body} of the clause and the right-hand side is called the \emph{head}.

\item
Disjunctively well-founded clauses or $dwf$-clauses are of the form: $$ dwf(q) $$

where $q(v,v')$ is an uninterpreted predicate of arity $2n$ for some $n\in \NN$.

\end{enumerate}

\subsubsection*{The semantics of EHCs}
Let $\II$ be an interpretation, defined over a domain $\DD$. $\II$  associates with each uninterpreted predicate of arity $k$ a $k$-ary relation over $\DD$. 
Given a formula $\phi$, $\II(\phi)$ is the formula obtained by replacing each occurrence of an uninterpreted predicate $p$ with its interpretation, $\II(p)$.
We say that clause $f(u) = body(u) \-> head(u)$ is satisfied under $\II$, if $\II(body(u)) \models_{\TT} \II(head(u))$.

 
To define the semantics of $dwf$-clauses we need the notion of a disjunctively well-founded relation. 
\begin{definition}
A relation $r (v, v')$ over a set $X$ is \emph{well-founded} if there is no  infinite sequence of elements in $X$, $x_1, x_2, \ldots$ such that for every $i$: $r(x_i, x_{i+1})$.
A relation $r(v,v')$ is \emph{disjunctively well-founded} if it is included in a finite union of well-founded relations. Formally, $r(v,v') \-> r_1(v,v')\vee \dots \vee r_l(v,v')$ for some well-founded relations $r_1,\dots, r_l$.
\end{definition}
%
An interpretation $\II$ 
 satisfies $dwf(q)$ for a predicate $q$ of arity $2n$, if $\II(q)$ is a disjunctively well-founded relation over the set $\DD^n$. 

Note that, our notion of satisfiability is semantic, referring only to the domain over which the interpretation $\II$ is defined, ignoring the syntax over which relations associated with uninterpreted predicates are expressed.


Theorem 1 from~\cite{1319598} can be reformulated as follows:
%
A relation is \emph{well-founded} if its transitive closure is disjunctively well-founded.
We  refer to this formulation when we use this theorem in our work.

\vspace{-0.3cm} 
\subsubsection*{Expressing disjunction and negation with EHC}
For our purposes, we need to express a disjunction in the heads of EHC clauses. A clause of the form
$body \-> head_1\vee head_2 $
 is equivalent to the following three forall-exists Horn-like clauses, using a new  boolean variable~$a$:
 \vspace{-0.22cm}
$$body \-> \exists a : head(a);\ \ \ head(0) \-> head_1; \ \ \ head(1) \-> head_2$$

Given a predicate $p(v)$, we define a predicate $q(v)$ such that $q(v) \equiv \neg p(v)$ using the following two clauses:
\vspace{-0.22cm} 
$$p(v)\wedge q(v) \-> \bot;\ \ \  \top \-> p(v) \vee q(v)$$
If both of them are satisfied, then $q(v) \equiv \neg p(v)$.

From now on, we use disjunctions and negations as part of the forall-exists Horn-like clauses, implying that we express them as described above.

\subsection{$CTL^*$ verification}
In this section we present the verification problems that we solve.


\noindent {\bf Program} Let us fix a first-order theory $\TT$.
We view a program as a transition system with fairness conditions. A program $P$ consists of a tuple of program variables $v$, and two formulas $next(v,v')$ and $init(v)$, describing the transition relation and the set of initial states, respectively. In addition, it includes a finite set $\JJ$ of assertions in $\TT$, describing the set of fairness conditions.

The states of $P$ are the valuations of $v$.  
A \emph{path} of $P$ is a sequence of states $s_1,s_2,\ldots$, such that $s_i,s_{i+1} \models_{\TT} next(v,v')$ for every $i$. An infinite path $s_1,s_2,\ldots$ is \emph{fair} if for every $\phi\in \JJ$ we have $s_i\models_\TT \phi$ for infinitely many $i$'s. 
If $\JJ = \emptyset$ then every infinite path is fair.
Note that the transition relation is not required to be total. Nevertheless, only infinite sequences of states are considered as program paths in the semantics of $CTL^*$ defined later.

\noindent {\bf Syntax and semantics of $CTL^*$}
We define the syntax of $CTL^*$ formulas in negation normal form (NNF), where negations are applied only to atomic assertions. Let $c$ range over assertions in $\TT$. State formulas $\phi$ and path formulas $\psi$ can be defined by the following grammar.
$$\phi:=c | \phi \wedge \phi | \phi \vee \phi | E\psi | A\psi | E_f \psi | A_f \psi $$
$$\psi:= \phi | G\psi | X\psi | \psi U \psi | \psi \wedge \psi | \psi \vee \psi$$
$CTL^*$ is the set of state formulas defined by this grammar.

We use $Q$ to denote a non-fair quantifier ($A$ or $E$), ranging over \emph{all} infinite program paths. $Q_f$ denotes a fair quantifier  ($A_f$ or $E_f$), ranging only over fair paths.


A $CTL^*$ state formula of the form $Q\psi$ or $Q_f\psi$ is called \textit{basic}, if $\psi$ does not contain path quantifiers. 

Let $P$ be a program, $s$ range over the states of $P$, and $\pi$ range over the paths of $P$. We use $\pi^i$ to denote the suffix of $\pi$, starting at the $i$'th state.

The semantics of $CTL^*$ is defined with respect to a program $P$, where boolean 
and temporal operators have their usual semantics: \ 
$\pi \models X\psi$ iff $\psi$ holds on $\pi^1$; \ $\pi \models G \psi$ iff $\psi$ holds on every suffix of $\pi$; \ $\pi \models \psi_1 U \psi_2$ iff $\psi_2$ eventually holds on a suffix $\pi^j$ of $\pi$ and $\psi_1$ holds in all preceding suffixes.
\begin{align*}
&P,s\models c \text{ iff } s\models_\TT c\\
&P,s\models E \psi \text{ iff there exists an infinite path $\pi$ starting at $s$ such that } P,\pi \models \psi\\
&P,s\models A \psi \text{ iff for every infinite path $\pi$ starting at $s$: } P,\pi \models \psi\\
&P,s\models E_f \psi \text{ iff there exists a fair path $\pi$ starting at $s$ such that } P,\pi \models \psi\\
&P,s\models A_f \psi \text{ iff for every fair path $\pi$ starting at $s$: } P,\pi \models \psi\\
&P,\pi \models \phi \text{ iff $s$ is the first state of $\pi$ and } P,s \models \phi\\
\end{align*}
%
%
\noindent {\bf The $CTL^*$ verification problem} We say that $P$ satisfies a $CTL^*$ formula $\phi$, denoted $P \models \phi$, if $\phi$ holds in every initial state of $P$. That is, for every state $s$ of $P$: $init(s) \-> P,s \models \phi$.

We view the verification problem as the tuple $(v, init(v), next(v,v'), J, \phi)$, where the first four elements are the parameters of the program and the last one is the $CTL^*$ state formula that the program must satisfy.

\section{Reduction of $CTL^*$ verification problem to EHC satisfiability}

Our goal is to generate a set of EHCs for a given verification problem, such that the set of EHCs is satisfiable iff the program meets the specification.

We denote the desired set of EHCs as $Clauses(v, init(v), next(v,v'), J, \phi)$ and define it recursively on the structure of $\phi$. We prove that this recursive definition is always sound (Theorem \ref{soundness}) and relatively complete (Theorem~\ref{completeness}).

Below, we present our \emph{translation system} $\tool$, which consists of eight rules, and explain the intuition behind them. 

We use the notation $\phi_1(\phi_2)$ to describe a formula $\phi_1$ that contains one or more occurrences of a state subformula $\phi_2$. $\phi_1(\gamma)$ then stands for the formula $\phi_1$ in which every occurrence of $\phi_2$ is replaced by the state formula~$\gamma$.

Denote $D = (v, init(v), next(v,v'), J)$. Recall that if $D \models \phi$ then for all $s$, $init(s) \-> \phi$.

\begin{enumerate}
\item\label{rule1}
Consider the $CTL^*$ state formula $\phi_2(\phi_1)$ with a basic state subformula $\phi_1$
 and let $aux$ be a  fresh uninterpreted predicate of arity $|v|$, then
$$Clauses(D, \phi_2(\phi_1)) =$$
$$=Clauses(v, aux(v), next(v,v'), J, \phi_1 ) \cup Clauses(D, \phi_2(aux)(v)).$$

{\bf Explanation of Rule 1:} The fresh predicate $aux$ represents an underapproximation of the set of states in which $\phi_1$ is satisfied. Since our formulas are in NNF, where negations are applied only to atomic assertions, we can conclude that $\phi_2(aux(v))$ is an underapproximation of $\phi_2(\phi_1)$. Thus, by $init(v) \-> \phi_2(aux)$ we also have  $init(v) \-> \phi_2(\phi_1)$.
\vspace{0.2cm}

\item\label{rule2}
Let $Q \phi$ be a $CTL^*$ basic state formula, where $Q$ is a non-fair path quantifier ($E$ or $A$). Let $Q_f$ be the fair version of $Q$ ($E_f$ or $A_f$, respectively), then
$$Clauses(D, Q\phi) = Clauses(v, init(v), next(v,v'), \emptyset, Q_f\phi).$$

{\bf Explanation of Rule 2:} Since the only path quantifier in $Q\phi$ is non-fair, it does not depend on the fairness conditions. Hence, we can eliminate those conditions from the program and replace $Q$ with its fair version.
\vspace{0.2cm}

\item\label{rule3}
Let $c$ be  an assertion in $\TT$. Suppose we have the $CTL^*$ basic state formula $Q_f \phi(Xc)$. Let $x_X$ be a new boolean variable, then
$$Clauses(D, A_f\phi(Xc)) = $$ $$=Clauses(v\cup \{ x_{X} \}, init(v), next(v,v')\wedge (x_{X} = c(v')), J,  A_f \phi(x_{X})).  $$
In the case of an existential path quantifier we also need a fresh uninterpreted predicate $aux$ of arity $|v|+1$:
$$Clauses(D, E_f\phi(Xc)) = \{ init(v) \-> \exists x_X : aux(v,x_X) \} \cup $$ 
$$ \cup \ Clauses(v\cup \{ x_{X} \}, aux(v, x_X), next(v,v')\wedge (x_{X} = c(v')), J,  E_f \phi(x_{X})). $$

{\bf Explanation of Rule 3:} The new variable $x_X$ represents the value of $c(v)$ in the next state of the path. This is achieved through the new constraint added to $next(v, v')$. Hence, we can replace $Xc$ by $x_X$ in the verified formula. 

For the existential quantifier $E_f$, we add a new auxiliary predicate $aux$ over $(v,x_X)$ that, if initially holds for a certain value of $x_X$, it is guaranteed that $E_f\phi(x_X)$ holds as well. The additional clause $init(v) \-> \exists x_X : aux(v,x_X)$, when interpreted together with the rest of the clauses,  guarantees that for every initial state there is a value of $x_X$, such that $aux(v, x_X)$ initially holds. The reason why we need this auxiliary predicate is that we do not know the value of $x_X$ in the beginning of the path. For example, if $ \phi(Xc) = FXc$, then a path that satisfies this formula may or may not have $ Xc$  in the first state. Therefore, we want to express that we can consider paths with any initial value of \( x_X \). The second set of clauses guarantees that if initially $aux(v,x_X)$ holds for some value $x_X$, then $F_f\phi(x_X)$ holds as well.

The same approach is applied in rules~\ref{rule4} and~\ref{rule5} with the existential quantifier.   

\vspace{0.2cm}

\item\label{rule4}
Let $c$ be  an assertion in $\TT$. Suppose we have the $CTL^*$ basic state formula $Q_f \phi(Gc)$.
Let $x_G$ be a new boolean variable, then 
$$Clauses(D, A_f \phi(Gc))= $$ $$ =  Clauses(v\cup \{ x_{G} \}, init(v), next(v,v')\wedge (x_{G} = (c(v)\wedge x_G')), J\cup \{ x_G\vee \neg c(v) \}, A_f \phi(x_{G})).  $$
In the case of an existential path quantifier, similarly to Rule~\ref{rule3}, we also need a fresh uninterpreted predicate $aux$ of arity $|v|+1$:
$$Clauses(D, E_f \phi(Gc))=  \{ init(v) \-> \exists x_G : aux(v,x_G) \} \cup $$
$$Clauses(v\cup \{ x_{G} \}, aux(v,x_G), next(v,v')\wedge (x_{G} = c(v)\wedge x_G'), J\cup \{ x_G\vee \neg c(v) \}, E_f \phi(x_{G})) $$ 

{\bf Explanation of Rule 4:}
If the new variable $x_G$ is true in some state, then the new condition added to the transition relation ensures that $c(v)$ and $x_G$ are true in every consecutive state.
Thus,  $x_G$ represents the value of $Gc$ along the path.
The new fairness condition ensures that if for every state of the path $c(v)$ is true, then $x_G$ must be true infinitely often along this path.

\vspace{0.2cm}

\item\label{rule5}
Let $c_1$ and $c_2$ be assertions in $\TT$. Suppose we have the $CTL^*$ basic state formula $Q_f \phi(c_1U c_2)$. Let $x_U$ be a new boolean variable, then 
$$Clauses(D, A_f \phi(c_1Uc_2))= Clauses(v\cup \{ x_{U} \}, init(v),  $$ $$next(v,v')\wedge (x_{U} = c_2(v) \vee (c_1(v)\wedge x_U')),J  \cup  \{ \neg x_U \vee c_2(v) \},  A_f \phi(x_{U})).  $$
In the case of an existential path quantifier, similarly to Rule~\ref{rule3}, we need a fresh uninterpreted predicate $aux$ of arity $|v|+1$:
$$Clauses(D, E_f \phi(c_1Uc_2)) =$$ 
$$=Clauses(v\cup \{ x_{U} \}, aux(v,x_U), next(v,v')\wedge (x_{U} = (c_2(v) \vee (c_1(v)\wedge x_U'))),$$ $$ J\cup \{ \neg x_U \vee c_2(v) \}, E_f \phi(x_{U})) \cup  
 \{ init(v) \-> \exists x_U : aux(v,x_U) \}. $$

{\bf Explanation of Rule 5:} The new variable $x_U$ represents the value of $c_1Uc_2$ along the path. The new condition added to the transition relation defines $U$ recursively. The new fairness condition ensures that if for every state of the path $x_U$ holds, then $c_2(v)$ must hold infinitely often along this path.

\vspace{0.2cm}

\item\label{rule6}
Let $c$ be an assertion in $\TT$. Suppose we have a basic $CTL^*$ state formula $A_f c$. Let $p$ be a fresh uninterpreted predicate of arity $|v|$ and let $r,t$ be  fresh uninterpreted predicates of arity $2|v|$. Let $v_0,\dots, v_{|J|}$ be $|J|+1$ copies of the set of program variables $v$. Suppose $J\neq \emptyset$, then,
$$Clauses(D, A_f c )\ = \  
 \{ init(v) \wedge \neg c(v)\-> p(v) ,\text{ } next(v,v')\wedge p(v) \-> p(v'),$$ $$  p(v_0)\wedge \bigwedge_{i=1}^{|J|} (t(v_{i-1},v_{i})\wedge J_i(v_i)) \-> r(v_0,v_{|J|}), \text{ }  dwf(r)  ,$$ $$ next(v,v') \-> t(v,v'), \ t(v,v')\wedge next(v',v'')\-> t(v,v'')\},$$
where $J_i$ is the $i$-th element of $J$.

If $J=\emptyset$, then,

$$Clauses(D, A_f c )= \ 
\ \{init(v) \wedge \neg c(v)\-> p(v) ,\text{ } next(v,v')\wedge p(v) \-> p(v'),$$
$$p(v)\wedge t(v,v')\-> r(v,v'),\text{ } dwf(r),$$
$$ next(v,v') \-> t(v,v'), \ t(v,v')\wedge next(v',v'')\-> t(v,v'')\}$$

\vspace{0.2cm}

{\bf Explanation of Rule 6:} The new predicate $p(v)$ represents states that are reachable from initial states that do not satisfy $c(v)$. The predicate $t$ represents the transitive closure of the transition relation $next$. The predicate $r(v,v')$ represents a pair of states from $p$ that can be connected by a path that satisfies every fairness constraint at least once.

In order to verify that the program satisfies $A_f c$ we need to prove that every initial state either satisfies $c(v)$ or there is no fair path starting in that state. 
The latter is guaranteed by showing that $r(v, v')$ is well-founded, which implies that not all fairness conditions hold infinitely often.

If $J = \emptyset$, then every infinite path is fair, since the program does not have any fairness constraints. Hence, we want to avoid infinite paths that start in initial states that do not satisfy $c$. In that case, we set $r(v,v')$ to represent pairs of states from $p$ that can be connected by a path. Since $r$ is disjunctively well-founded, there is no infinite path from a state that satisfies $p$.

\vspace{0.2cm}

\item\label{rule7}
Let $c$ be an assertion in $\TT$ and let $E_f c$ be a basic $CTL^*$ state formula. Let $q_1,\dots,q_{|J|}$ be fresh uninterpreted predicates of arity $|v|$ and  $r_1,\dots,r_{|J|}$ be fresh uninterpreted predicates of arity $2|v|$. Suppose $J\neq \emptyset$. Then,
$$Clauses(D, E_f c ) \ = \ 
 \{init(v)\-> c(v)\wedge q_1(v) \} \ \cup $$
$$\cup \ \{ q_i(v) \-> \exists v': next(v,v')\wedge ((J_i(v)\wedge q_{(i\% |J|) + 1}(v'))\vee(r_i(v,v')\wedge q_i(v'))), $$
$$  dwf(r_i) , \text{ }r_i(v,v') \wedge r_i(v',v'') \-> r_i(v,v'') \; | \; i\leq |J| \},$$

where $J_i$ is the $i$-th element of $J$. $i\% |J|$ is the remainder of $i$ modulo $|J|$.

If $J=\emptyset$, let $q$ be a fresh predicate of arity $|v|$, then
$$Clauses(D, E_f c ) \ = \ 
\{ init(v)\-> c(v)\wedge q(v) , \  q(v)\-> \exists v': next(v,v')\wedge q(v')\} $$

\vspace{0.2cm}

{\bf Explanation of Rule 7:} In order to verify that the program satisfies $E_f c$ we need to prove that every initial state $s_0$ satisfies $c(v)$ and there is a fair path starting at $s_0$.

The clauses $dwf(r_i) , \text{ }r_i(v,v') \wedge r_i(v',v'') \-> r_i(v,v'')$ ensure that $r_i$ is a disjunctively well-founded and transitive relation. Hence, it is well-founded. $r_i(v,v')$ means that $v'$ is closer to a state that satisfies $J_i$ than $v$. Since $r_i$ is well-founded it takes only finitely many steps to get to a state that satisfies $J_i$.

The new predicate $q_i$  represents the states from which it is possible to get to a state $s$ that satisfies $J_i$. In addition, there is a transition from $s$ to a state satisfying $q_j$ for $j = (i\%|J|)+1$, meaning that eventually, $J_j$ holds too. 


We use $(i \%|J|) +1$ instead of $i+1$ to emphasize that when a state that satisfies $J_{|J|}$ is found, we continue to look for a state that satisfies $J_1$, so every $J_i$ will be satisfied infinitely many times along the path.  

If $\JJ = \emptyset$, then every infinite path is a fair path. Hence, we just prove that every initial state satisfies $c(v)$ and is the start of an infinite path. 
Predicate $q$ represents the set of states, that lie on an infinite path.
\vspace{0.2cm}

\item\label{rule8}
Let $c$ be an assertion in $\TT$. Suppose we have the $CTL^*$ state formula $c$, then 

$$Clauses(v, init(v), next(v,v'), J, c )= \{ init(v) \-> c(v) \}.$$
\end{enumerate}

\begin{remark}\label{remarkNonUnique}
Applying $\tool$ to a verification problem may result in a non-unique set of clauses.
This may occur when the specification is a basic state formula that contains more than one path subformula.
For example, $E_f ( Xc_1\vee Gc_2)$. In such cases, these subformulas can be eliminated (using rules 3 and 4) in any order. 
As the soundness and relative completeness theorems prove, all sets produced by $\tool$ correctly represent the given verification problem.

Unless otherwise stated, $Clauses(v, init(v), next(v,v'), J, \phi )$ will refer to \emph{any} set of clauses that may be produced by $\tool$.

\end{remark}

For every verification problem,  $\tool$  terminates, since every rule calls procedure $Clauses$ for a problem with a smaller verification formula, except of rule~\ref{rule2}, which can be called only once for every non-fair path quantifier in the specification formula. 

\begin{theorem}[complexity] \label{complexity}
Let $(D,\phi)$ be a verification problem, with a program of size $m$, with $k$ fairness conditions, and a specification formula of size~$n$. Then $\tool$ produces a set $Clauses(D,\phi)$ consisting of $O(n(n+k))$ clauses with a total size of $O(n(n+k)(n+m))$.
\end{theorem}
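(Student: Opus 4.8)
The plan is to prove the bound by structural induction on the specification formula $\phi$, following exactly the recursive structure of $\tool$, and to track two quantities simultaneously: the number of clauses produced and their total size. First I would set up the bookkeeping. Let $n = |\phi|$, $m$ be the size of the program $D = (v, init(v), next(v,v'), J)$, and $k = |J|$. The key preliminary observation is that the rules that eliminate temporal operators (rules~\ref{rule3}--\ref{rule5}) each add exactly one fresh boolean variable to $v$ and one fresh fairness condition to $J$, while strictly decreasing the number of temporal operators in the formula; rule~\ref{rule2} removes all fairness conditions and is applied at most once per non-fair path quantifier; rule~\ref{rule1} splits off a basic state subformula; and rules~\ref{rule6}--\ref{rule8} are the leaves. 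So along any branch of the recursion, the number of fairness conditions stays bounded by $k + (\text{number of temporal operators}) = O(n+k)$, and the size of the program component stays bounded by $m + O(n+k) = O(n+m+k)$, which I will absorb into $O(n+m)$ under the implicit assumption $k = O(n+m)$ (or simply carry $k$ along — it is dominated).

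Next I would do the induction. The natural induction measure is lexicographic: first on the number of path quantifiers in $\phi$, then on the number of temporal operators. For the base cases: rule~\ref{rule8} produces one clause of size $O(n+m)$; rule~\ref{rule6} produces $O(k)$ clauses (the transitivity/step clauses for $t$, the well-foundedness clause, and the chain clause involving $v_0,\dots,v_{|J|}$, whose size is $O(k(n+m))$), so $O(k)$ clauses of total size $O(k(n+m))$; rule~\ref{rule7} similarly produces $O(k)$ clauses (three per fairness condition, plus the initial one) of total size $O(k(n+m))$. These match $O(n(n+k))$ clauses and $O(n(n+k)(n+m))$ size when $n$ is a constant (a basic state formula $Qc$ with $c$ an assertion), and more importantly they give the right per-application contribution. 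For rules~\ref{rule3}--\ref{rule5} with the universal quantifier, the rule itself contributes $O(1)$ clauses of size $O(n+m)$ beyond the recursive call (the modified $next$ has one extra conjunct of size $O(|c|) \le O(n)$, carried into every clause that mentions $next$); with the existential quantifier, one extra clause $init(v) \to \exists x : aux(v,x)$ of size $O(n+m)$ is added. Rule~\ref{rule1} makes two recursive calls, one on $\phi_1$ (a basic state formula, handled by rules~\ref{rule2}/\ref{rule6}/\ref{rule7}, contributing $O(|\phi_1|+k)$ clauses since each temporal elimination adds one to $k$ and at the leaf we pay $O(k + |\phi_1|)$) and one on $\phi_2(aux)$, which has strictly fewer path quantifiers and size $O(n)$; rule~\ref{rule2} makes one recursive call with no size increase.

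Summing over the recursion tree: there are at most $O(n)$ applications of the temporal-elimination and basic-formula rules in total (bounded by the number of operators in $\phi$), each carrying a program component of size $O(n+m)$ and living alongside a fairness set of size $O(n+k)$, so the leaf rules~\ref{rule6}/\ref{rule7} — of which there is one per path quantifier, hence $O(n)$ many — each contribute $O(n+k)$ clauses, giving $O(n(n+k))$ clauses overall; multiplying by the maximum clause size $O((n+k)(n+m))$ (a clause in rule~\ref{rule6} mentioning $v_0,\dots,v_{|J|}$ and the enlarged $next$) gives total size $O(n(n+k)(n+m))$. I would present this as: establish per-rule recurrences $N(\phi) \le N(\phi') + O(n+k)$ and $S(\phi) \le S(\phi') + O((n+k)(n+m))$ for the single-recursive-call rules, $N(\phi) \le N(\phi_1) + N(\phi_2(aux)) + O(1)$ for rule~\ref{rule1}, and base values $O(n+k)$ / $O((n+k)(n+m))$ for the leaves, then unfold using that the total number of rule applications is $O(n)$.

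The main obstacle I anticipate is the careful accounting for rule~\ref{rule1}: because $\phi_1$ must itself be re-translated from scratch (with its own temporal eliminations, each growing $J$), one must argue that the sizes of the subformulas $\phi_1$ across all applications of rule~\ref{rule1} sum to $O(n)$ rather than blowing up — this is where the "strictly fewer path quantifiers in $\phi_2(aux)$" ordering is essential, and where one must be careful that replacing $\phi_1$ by a fresh predicate $aux(v)$ does not increase $|\phi_2|$. A secondary subtlety is that the program component $next$ is duplicated into many clauses and is itself enlarged by rules~\ref{rule3}--\ref{rule5}, so the "$(n+m)$" factor rather than just "$m$" in the size bound must be tracked through every recursive call; I would handle this by fixing, once and for all, that after all temporal eliminations the program has size $O(n+m)$ and $O(n+k)$ fairness conditions, and then every clause has size at most $O((n+k)(n+m))$.
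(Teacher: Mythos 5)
Your proposal is correct and follows essentially the same route as the paper's own proof: first bound each basic-state-formula subproblem at $O(n+k)$ clauses of total size $O((n+k)(n+m))$, after observing that rules 3--5 grow the fairness set to $O(n+k)$ and the program to $O(n+m)$ while contributing $O(1)$ clauses each, and then use rule 1 to reduce the general case to $O(n)$ such subproblems. One small caution: your phrase ``multiplying by the maximum clause size $O((n+k)(n+m))$'' would literally yield $O(n(n+k)^2(n+m))$; the per-leaf recurrence $S(\phi)\le S(\phi')+O((n+k)(n+m))$ that you state immediately afterwards is the right formulation, since only the single chain clause of rule 6 attains that size while all other clauses have size $O(n+m)$.
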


\subsection{Soundness}

\begin{theorem}[\textit{soundness}] \label{soundness}

For every set $Clauses(v, init(v), next(v,v'), J, \phi )$ of clauses produced by $\tool$,  if the set is satisfiable, then the program $D = (v, init(v), next(v,v'), J )$ satisfies $\phi$.
 
\end{theorem}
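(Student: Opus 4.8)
The plan is to prove the statement by induction on the structure of the specification formula $\phi$, following exactly the recursive structure of the translation system $\tool$. Since $Clauses(D,\phi)$ is defined by a finite derivation tree whose nodes are applications of rules~\ref{rule1}--\ref{rule8}, and since (as noted after Remark~\ref{remarkNonUnique}) $\tool$ always terminates, it suffices to show: for each rule, if the sets of clauses produced by the recursive calls in that rule are sound (each such set being satisfiable implies the corresponding subprogram satisfies the corresponding subformula), then the set produced by the rule itself is sound. The base cases are rules~\ref{rule6}, \ref{rule7}, and \ref{rule8}, which make no recursive call; the inductive cases are rules~\ref{rule1}--\ref{rule5}, each of which strips one connective, one temporal operator, one path-quantifier fairness annotation, or one basic subformula.

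First I would handle the base cases. For rule~\ref{rule8} this is immediate from the definition of $D\models c$. For rule~\ref{rule6} ($A_f c$): given a satisfying interpretation of $p, r, t$, argue that $t$ contains the transitive closure of $next$, $p$ contains every state reachable from an initial $\neg c$-state, and $r$ relates any two $p$-states joined by a finite path that hits each $J_i$ in order; then, if some initial state $s_0\not\models c$ had a fair path, that path would yield an infinite descending $r$-chain (by picking successive segments each of which visits $J_1,\dots,J_{|J|}$), contradicting $dwf(r)$ — hence every initial state satisfies $c$, i.e. $D\models A_f c$; the $J=\emptyset$ subcase is the analogous argument where every infinite path is fair. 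For rule~\ref{rule7} ($E_f c$): from a satisfying interpretation, $init(v)\-> c(v)$ gives the assertional part, and the remaining clauses let one greedily construct, from any initial state, an infinite fair path — at stage $i$ the clause $q_i(v)\->\exists v'\colon next(v,v')\wedge(\dots)$ lets us either move to a $q_{(i\%|J|)+1}$-successor while recording that $J_i$ was hit, or stay within $q_i$ along $r_i$; well-foundedness of $r_i$ (from $dwf(r_i)$ plus transitivity) forbids staying in the second branch forever, so each $J_i$ is visited infinitely often and the constructed path is fair; again the $J=\emptyset$ subcase is simpler.

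Next I would do the inductive cases. Rule~\ref{rule8}'s companion rule~\ref{rule1}: if the clause set is satisfiable, restrict the interpretation to the first component $Clauses(v,aux(v),next,J,\phi_1)$ — still satisfiable — so by the induction hypothesis the program with initial condition $aux$ satisfies $\phi_1$, i.e. $aux$ under-approximates $\llbracket\phi_1\rrbracket$; the second component gives (by IH) that $D$ with $\phi_2(aux)$ holds, and since $\phi$ is in NNF the map $\gamma\mapsto\phi_2(\gamma)$ is monotone, so $init(v)\-> \phi_2(aux)(v)\-> \phi_2(\phi_1)(v)$ — this is exactly the argument sketched in the rule's explanation. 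Rule~\ref{rule2} is a semantic observation: a formula whose only path quantifier is non-fair has the same truth value under any fairness condition, so we may pass to $J=\emptyset$ and replace $Q$ by $Q_f$, then invoke IH. Rules~\ref{rule3}--\ref{rule5} are the heart of the temporal step: in each, the augmented program over $v\cup\{x_\star\}$ has a transition constraint pinning $x_\star$ to the value of the temporal subformula ($Xc$, $Gc$, or $c_1Uc_2$) along the path — for $G$ and $U$ the recursion $x_G=(c\wedge x_G')$, resp. $x_U=(c_2\vee(c_1\wedge x_U'))$, together with the added fairness condition $x_G\vee\neg c$, resp. $\neg x_U\vee c_2$, forces $x_\star$ to equal the temporal value on \emph{fair} paths (the fairness condition rules out the ``false'' fixed point of the recursion when the temporal property actually holds). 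One then checks that fair paths of the augmented program project bijectively onto fair paths of $D$ with $x_\star$ determined, so $A_f\phi(x_\star)$ in the augmented program is equivalent to $A_f\phi(Xc)$ etc. in $D$; for the $E_f$ variant, the extra clause $init(v)\->\exists x_X\colon aux(v,x_X)$ plus soundness of the recursive call supplies, for each initial state, the correct starting value of $x_\star$ along the witnessing path. In every inductive case, after the program/formula transformation I invoke the induction hypothesis on the (structurally smaller) recursive call.

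The main obstacle I expect is rules~\ref{rule3}--\ref{rule5}, specifically making rigorous the correspondence between fair computations of the augmented program and fair computations of $D$, and proving that the added fairness condition correctly ``decides'' the temporal subformula — i.e. that on a fair path the only consistent valuation of $x_G$ (resp. $x_U$) is the one that tracks $Gc$ (resp. $c_1Uc_2$). This requires a small lemma: the recursion $x=(c\wedge x')$ admits, along a given infinite suffix, the valuation $x\equiv\mathit{true}$ iff $c$ holds everywhere on that suffix, and otherwise forces $x=\mathit{false}$ from some point on; the fairness constraint $x_G\vee\neg c$ eliminates the remaining ambiguity (the all-$c$-but-$x_G$-eventually-false run is unfair). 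A second, more bookkeeping-heavy obstacle is rules~\ref{rule6} and \ref{rule7}, where one must carefully verify that the uninterpreted predicates in a satisfying interpretation genuinely contain the reachability/transitive-closure sets they are meant to represent — the clauses only assert closure properties, so the argument is that the \emph{actual} reachable sets form a satisfying interpretation lower bound, and one reasons about the interpretation at hand rather than the least one; disjunctive well-foundedness of $r$ (resp. $r_i$) then applies because $dwf$ is downward absolute. Everything else is routine propagation of the induction hypothesis through the structural recursion.
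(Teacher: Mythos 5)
Your proposal is correct and follows essentially the same route as the paper's proof: structural induction over the rule applications, with rules \ref{rule6}--\ref{rule8} as base cases settled by disjunctive-well-foundedness arguments and rules \ref{rule1}--\ref{rule5} handled by relating fair paths of the augmented program to fair paths of $D$ (the paper phrases Rule~\ref{rule2} and the universal cases of rules \ref{rule3}--\ref{rule5} contrapositively, which is only a cosmetic difference). You also correctly isolate the two technical points the paper's argument rests on: that on \emph{fair} paths the added fairness condition forces $x_G$ (resp.\ $x_U$) to track $Gc$ (resp.\ $c_1Uc_2$), and that in Rule~\ref{rule6} the chosen subsequence yields $r$ between \emph{every} ordered pair of its elements, which is what licenses the appeal to disjunctive well-foundedness (via Theorem~1 of the transition-invariants paper) given that $r$ is not itself required to be transitive there.
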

    
\begin{proof}
We prove the theorem by induction on the size of the formula $\phi$. Recall that by definition, CTL$^*$ formulas are state formulas. As such, they are true in a program $D$ if all initial states of $D$ satisfy the formula. That is, $init(v) \rightarrow \phi$. We will use this fact in the proof.

\vspace{0.2cm}

\noindent {\bf Base}:
Rule \ref{rule8} forms the basis of the induction.
For an assertion $c$ in $\TT$, 
$Clauses(D, c )= \{ init(v) \-> c(v) \}$. If this set of clauses is satisfiable then, by definition of CTL$^*$ satisfiability, $D \models c$.
\vspace{0.2cm}

\noindent {\bf Induction step}:\\
{\bf Rule \ref{rule1}}:\\
Assume a set of clauses $Clauses(D, \phi_2(\phi_1))$ is satisfiable by an interpretation $\II$.
Then,  $\II$ satisfies $Clauses(v, aux(v), next(v,v'), J, \phi_1 )$ and by the induction hypothesis, $(v, \II(aux)(v), next(v,v'), J) \models \phi_1$. Thus, $I(aux)(v) \rightarrow \phi_1$. Also, $\II$ satisfies $Clauses(D, \phi_2(aux)(v))$ and by the induction hypothesis 
$D \models \phi_2(I(aux))$. Since $\phi_2$ is in Negation Normal Form (NNF), replacing all occurrences of $I(aux)$ in $\phi_2$ with $\phi_1$  results in a formula that is also true in $D$. We thus conclude that $D \models \phi_2(\phi_1)$, as required.

\vspace{0.2cm}
We present here the proof of Rule~\ref{rule4} (Universal and Existential cases). The proofs of Rules~\ref{rule3}, \ref{rule5} are presented in the Appendix \ref{append:sound}.  

\vspace{0.2cm}
\noindent {\bf Rule \ref{rule4}, Universal}: \\
Suppose $D= (v, init(v), next(v,v'), J)$ does not satisfy $A_f\phi(Gc)$. Then there is a fair path $\pi = s_1,s_2,\dots$ such that $D, s_1 \models init(v)$ and $D,\pi \not \models \phi(Gc)$.

Consider the path $\pi_G= s_1',s_2',\dots$ of the program $D_G =(v\cup \{ x_{G} \}, init(v), next(v,v')\wedge (x_{G} = c(v)\wedge x_G'), J\cup \{ x_G\vee \neg c(v) \})$, where $s_i' (v) = s_i (v)$ and $s_i'(x_G) = true$ iff for every $j\geq i:$ $s'_j \models c(v)$. 
Note that, once $x_G$ is set to true on $\pi_G$, $c(v)$ is set to true. Moreover, they both remain true forever from that point on. 

$\pi_G$ is a fair path of $D_G$ since the fairness conditions of $D$ do not depend on $x_G$. The additional fairness condition of $D_G$ holds as well, since if $c(v)$  holds from a certain point on along $\pi_G$ then $x_G$ holds from that point on. Thus, $x_G\vee \neg c(v)$ holds  infinitely often along $\pi_G$.

Also $D_G,\pi_G \not \models \phi(x_G)$, since $x_G$ is true on the same locations of path $\pi_G$ as the locations of $\pi$ where $Gc$ is true.

Hence, $D_G$ does not satisfy $A_f \phi(x_G)$. By the induction hypothesis, every set of clauses produced by $\tool$ for 
$Clauses(D_G, A_f \phi(x_{G})) $
is unsatisfiable. Since $Clauses(D, A_f\phi(Gc)) = Clauses(D_G, A_f \phi(x_{G}))$,
this implies that every set of clauses produced by $\tool$ for  $Clauses(D, A_f\phi(Gc))$ is unsatisfiable, as required.

\vspace{0.2cm}
\noindent {\bf Rule \ref{rule4}, Existential}: \\
Suppose a set $Clauses(v, init(v), next(v,v'), J, E_f \phi(Gc)) $ 
is satisfied by interpretation $\II$.
Denote $D_G= (v\cup \{ x_{G} \}, \II(aux)(v,x_G), next(v,v')\wedge (x_{G} = c(v)\wedge x_G'), J\cup \{ x_G\vee \neg c(v) \})$. Then, by Rule~\ref{rule4}, $Clauses(D_G,E_f \phi(x_{G})) $ is also satisfied by $\II$. By the induction hypothesis $D_G \models  E_f \phi(x_{G})$. 

Let $s$ be a state of $D =  (v, init(v), next(v,v'), J)$, such that $D, s\models init(v)$. Since the clause $ \{ init(v) \-> \exists x_G : aux(v,x_G) \}$ is satisfied by interpretation $\II$, then there is a state $s_G$ of $D_G$  such that $s_G(v)=s(v)$ and $D_G,s_G \models I(aux)(v, x_G)$. Moreover, there is a fair path $\pi_G = s_G,s'_2,s'_3,\dots$ of $D_G$ such that $D_G,\pi_G \models \phi(x_G)$. Consider the path of $D$ defined by $\pi = s,s_2,s_3\dots$, where $s_i(v)$ is the restriction of $s'_i$ to the variables from $v$. 
%
%
Then $D,\pi \models \phi(Gc)$. This is because $x_G$ is true on the same places of the path $\pi_G$ as the places of $\pi$ where $Gc$ is true.
Moreover, $\pi$ is a fair path.
Thus, $D$ satisfies $E_f \phi(Gc)$.

\vspace{0.2cm}
\noindent {\bf Rule \ref{rule6}}: \\
Denote $D = (v, init(v), next(v,v'), J)$.
Suppose by way of contradiction that a set $Clauses(D, A_f c )$
is satisfied by an interpretation $\II$
but  the program $D $ does not satisfy the formula $ A_f c $. Then, there is a fair path $\pi = s_0,s_1,s_2,\dots$, such that $D, s_0 \models init(v)$ and $D,\pi \not \models c$.

By the clauses $init(v) \wedge \neg c(v)\-> \II(p)(v) ,\text{ } next(v,v')\wedge \II(p)(v) \-> \II(p)(v')$ we know that for every $i:$ $s_i \models \II(p)(v)$.

Since $\pi$ is fair, each $J_j$ holds infinitely often along $\pi$. Moreover, there is an infinite increasing sequence of indexes $0<i_1<i_2<i_3<,\dots$, such that for every $k:$ $s_{i_k}\models J_{k\%|J|+1}$.

By the clause
$\II(p)(v_0)\wedge \bigwedge_{i=1}^{|J|} (\II(t)(v_{i-1},v_{i})\wedge J_i(v_i)) \-> \II(r)(v_0,v_{|J|})$ we get that $s_{i_{n_1|J|}}, s_{i_{n_2|J|}}\models \II(r)(v,v')$ for every pair of natural numbers $n_1,n_2$, provided $n_1< n_2$. Thus, $\II(r)$, which is disjunctive well-founded, contains an infinite transitive chain, but this is impossible due to Theorem 1 from \cite{1319598}. A contradiction.

\vspace{0.2cm}
\noindent {\bf Rule \ref{rule7}}: \\
Suppose a set of clauses $Clauses(D, E_f c )$ 
is satisfied by interpretation $\II$. 

In order to prove that $D$ satisfies $E_f c$ we need to show that every state that satisfies $init(v)$ also satisfies $c(v)$ and is a start of a fair path.
 
 Suppose for some state $s$ of $D$, $s\models init(v)$, then by clause $init(v)\-> c(v)\wedge \II(q_1)(v)$ we have $s\models c(v)$ and $s\models \II(q_1)(v)$.

 By clauses $\{  dwf(\II(r_i)), \text{ }\II(r_i)(v,v') \wedge \II(r_i)(v',v'') \-> \II(r_i)(v,v'') \; | \; i\leq |J| \}$ we know that for every $i\leq |J|$, relation $\II(r_i)$ is disjunctively well-founded and transitive, hence it is a well-founded relation by Theorem 1 from \cite{1319598}.

 For every $i\leq |J|$, by clause $ \II(q_i)(v) \-> \exists v': next(v,v')\wedge ((J_i(v')\wedge \II(q_{(i\%|J|)+ 1})(v'))\vee(\II(r_i)(v,v')\wedge \II(q_i)(v'))) $ and the fact that $\II(r_i)$ is a well-founded relation we can conclude that for every state $t$ of $D$ if $t\models \II(q_i)(v)$ then there is a path in $D$ from $t$ to some state $t'$ such that $t'\models J_i(v) $ and there is a transition from $t'$ to some state $t''$ such that $t''\models \II(q_{(i\%|J|)+ 1})(v)$. 

Thus, there is a fair path, that starts in $s$. \qed
 
\end{proof}

\subsection{Relative completeness}

Next we show that $\tool$ is relative complete. That is, if a verification problem holds, i.e., the program meets its specification, then every set of clauses, produced by $\tool$ for that problem, is satisfied by some interpretation of the uninterpreted predicates. However, this interpretation might not be expressible by a first-order formula in our logic.

\begin{theorem}[\textit{relative completeness}] \label{completeness}
    If $D = (v, init(v), next(v,v'), J )$ satisfies $\phi$ then every set produced by $\tool$ for $Clauses(D, \phi )$ is satisfiable by some interpretation of the uninterpreted predicates.
\end{theorem}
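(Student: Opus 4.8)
The plan is to prove the statement by induction on the structure of $\phi$, following the same rule-by-rule case analysis as the soundness proof (Theorem~\ref{soundness}) but in the converse direction: for each translation rule, assuming the claim for all strictly smaller verification formulas, I exhibit an explicit interpretation of the fresh uninterpreted predicates satisfying the clauses that rule produces. These interpretations will in general not be first-order definable in $\TT$ — this is exactly where the word ``relative'' is used. As the induction measure I take the pair (number of non-fair path quantifiers in $\phi$, size of $\phi$) ordered lexicographically, so that Rule~\ref{rule2} counts as strictly reducing the formula, and I repeatedly use that for a $CTL^*$ (state) formula $D\models\psi$ means precisely $init(v)\to\psi$. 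Since the induction is on the formula and the argument for each rule depends only on which rule is applied at the outermost step, the argument applies uniformly to \emph{every} clause set produced by $\tool$, despite the non-uniqueness noted in Remark~\ref{remarkNonUnique}.

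The easy cases come first. The base case is Rule~\ref{rule8}: the single clause $init(v)\to c(v)$ is literally the hypothesis $D\models c$ and there are no predicates to interpret. For Rule~\ref{rule1} I interpret $aux$ as the exact set of states of $D$ at which $\phi_1$ holds; then the program with $init$ replaced by that set satisfies $\phi_1$ trivially, under this interpretation $\phi_2(aux)$ is the formula $\phi_2(\phi_1)$ so $D\models\phi_2(aux)$, and the induction hypothesis applied to the two recursively generated clause sets yields satisfying interpretations whose union (agreeing on $aux$, and disjoint on all other fresh predicates) satisfies $Clauses(D,\phi_2(\phi_1))$. Rule~\ref{rule2} is immediate: $Q\phi$ does not depend on $J$, so $D\models Q\phi$ iff $(v,init(v),next(v,v'),\emptyset)\models Q_f\phi$, and we invoke the induction hypothesis. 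For the universal cases of Rules~\ref{rule3}--\ref{rule5} I reuse the path correspondence from the soundness proof: along every fair path of the enriched program the fresh boolean variable has the same value as the eliminated temporal subformula along the projected path (for $G$ and $U$ this is where the added fairness condition is needed), and every fair path of $D$ lifts to one of the enriched program; hence $D\models A_f\phi(\cdot)$ forces the enriched program to satisfy the reduced formula, and the induction hypothesis applies.

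The substantial cases are the existential ones and Rules~\ref{rule6}, \ref{rule7}. For the existential case of, say, Rule~\ref{rule4}, I interpret $aux$ as the set of pairs $(s,b)$ from which the enriched program $D_G$ has a fair path satisfying $\phi(x_G)$; the recursive clause set is then satisfied by the induction hypothesis, and the clause $init(v)\to\exists x_G\, aux(v,x_G)$ holds because $D\models E_f\phi(Gc)$ gives from each initial state a fair path of $D$ satisfying $\phi(Gc)$, which lifts to a fair path of $D_G$ whose initial $x_G$-value witnesses membership in $aux$. For Rule~\ref{rule6} with $J\neq\emptyset$ I interpret $t$ as $next$-reachability (in $\ge 1$ step), $p$ as the set of states reachable from initial states that violate $c$ — every such state starts no fair path of $D$, since prepending the finite prefix to a fair continuation would contradict $D\models A_f c$ — and $r$ as the minimal relation forced by the implication clause; an infinite $r$-chain would, concatenating its $J_1,\dots,J_{|J|}$-witnessing segments, produce a fair path out of $p$, so $r$ is well-founded and $dwf(r)$ holds (the $J=\emptyset$ sub-case is analogous, using that then every infinite path is fair). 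For Rule~\ref{rule7} with $J\neq\emptyset$ I interpret every $q_i$ as the set of states of $D$ that start a fair path, and $r_i(a,b)$ as: $b$ is $next$-reachable from $a$ in $\ge 1$ step, both $a$ and $b$ start fair paths, and the least number of steps until the next $J_i$-visit over all fair paths from $b$ is strictly smaller than that from $a$; this $r_i$ is transitive and, being bounded by natural numbers, well-founded, hence disjunctively well-founded. The first clause holds since $D\models E_f c$ makes each initial state satisfy $c$ and start a fair path; the existential step clause is discharged by moving one step along a fair path from the given state that attains this minimal count — if that step already reaches $J_i$ take the left disjunct, otherwise the count strictly decreases and the right disjunct holds with the above $r_i$. (For $J=\emptyset$, $q$ is interpreted as the set of states that start an infinite path.)

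The main obstacle is the construction for Rules~\ref{rule6} and~\ref{rule7}: one must simultaneously define the ranking-based interpretations of the well-foundedness predicates and verify both that they are transitive and (disjunctively) well-founded \emph{and} that the existentially quantified step clauses can be discharged using exactly those relations. The delicate point is that the ranks must be taken as minima over \emph{all} fair continuations rather than along one fixed path, so that a progress step is always available; getting this quantifier order wrong breaks either well-foundedness or the step clause. A secondary, purely bookkeeping obstacle is ensuring that freshness lets the separately obtained interpretations for the recursively generated clause sets be combined into one, and that the whole argument is phrased so as to cover every set produced by $\tool$.
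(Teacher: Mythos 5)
Your proposal is correct and follows essentially the same route as the paper's own proof: structural induction over the translation rules, interpreting $aux$ as the exact semantic set of $\phi_1$-states (Rule~\ref{rule1}) resp.\ of states satisfying the reduced existential formula (Rules~\ref{rule3}--\ref{rule5}), $p$ and $t$ via (un)fairness and reachability for Rule~\ref{rule6}, and the $q_i$ as fair states with $r_i$ given by a distance-to-next-$J_i$ ranking for Rule~\ref{rule7}. The only differences are cosmetic improvements: you make the induction measure for Rule~\ref{rule2} explicit via a lexicographic pair, and your rank-decrease formulation of $r_i$ is transitive by construction, whereas the paper's ``successor on the shortest path'' phrasing leaves transitivity slightly more implicit.
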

\begin{proof}
We prove the theorem by induction on the size of the verified formula.
\vspace{0.2cm}

\noindent {\bf Base}:\\
{\bf Rule \ref{rule8}}:
For assertion $c$ in $\TT$,  if $D \models c$ then 
$Clauses(D, c )= \{ init(v) \-> c(v) \}$ is satisfiable  by definition of CTL$^*$ semantics.
\vspace{0.2cm}

\noindent {\bf Induction step}:\\
{\bf Rule \ref{rule1}}:\\   
Assume $D \models \phi_2(\phi_1)$.
For the uninterpreted predicate $aux$, we choose as interpretation the predicate $\II(aux)$, which is true in exactly those states where $\phi_1$ is true.
Then, $(v, \II(aux)(v), next(v,v'), J)$ meets $\phi_1$ and $(v, init(v), next(v,v'), J)$ meets $\phi_2(\II(aux))$. By the induction hypothesis, the corresponding sets of clauses are satisfiable. Consequently, $Clauses(D, \phi_1(\phi_2))$ is satisfiable.

\vspace{0.2cm}
\noindent In the rest of the proof we omit some details of the induction steps. The full proof appears in the Appendix \ref{appen:complete}. 

\vspace{0.2cm}
\noindent {\bf Rule \ref{rule3}, Universal}: \\
If $D= (v, init(v), next(v,v'), J)$ meets $A_f\phi(Xc)$, then $D_X=(v\cup \{ x_{X} \}, init(v), next(v,v')\wedge (x_{X} = c(v')), J)$ meets $A_f \phi(x_{X})$. This is because for every fair path of $D_X$, the formula $x_{X}$ is true in exactly the same places where $Xc$ is true on the corresponding path of $D$. By the induction hypothesis, every set of clauses produced by $\tool$ for $Clauses(D_X, A_f \phi(x_{X}))$ is satisfiable. Consequently,  every set of clauses produced by $\tool$ for $Clauses(D, A_f\phi(Xc))$ is satisfiable, as required.

\vspace{0.2cm}

\vspace{0.2cm}
\noindent {\bf Rule \ref{rule6}}:\\
Suppose the program $D= (v, init(v), next(v,v'), J)$ satisfies $A_f c$. Recall that, $D \models A_f c$ implies that every initial state of $D$ either satisfies $c$ or has no fair path starting in it. We define an interpretation $\II$  as follows:
\vspace{-0.1cm}
\begin{itemize}
\item 
    $s\models \II(p)(v)$ iff there is no fair path starting at $s$.
\item 
    $s,s'\models \II(t)(v,v')$ iff there is a path from $s$ to $s'$.
\item 
    $s,s'\models \II(r)(v,v')$ iff $s\models \II(p)(v)$, $s'\models J_{|J|}(v')$ and there is a path from $s$ to $s'$ which visits a state  satisfying $J_1$, then a state satisfying $J_2$ and so on until it ends in $s'$ that satisfies $J_{|J|}$.
\end{itemize}
\vspace{-0.2cm}

\vspace{0.2cm}
\noindent {\bf Rule \ref{rule7}}:\\
Suppose the program $D= (v, init(v), next(v,v'), J)$ satisfies $E_f c$. This means that every initial state $s$ of $D$ satisfies $c$, and, in addition, $s$ is the start of a fair path. We say that a state $s$ of $D$ is \emph{fair} if there is a fair path starting at~$s$.
We define the interpretation $\II$ as follows. For every $i \in \{1, \ldots, |J| \}$:

\begin{itemize}
 \item
$s\models \II(q_i)(v)$ iff $s$ is a fair state.
\item
$s,s' \models \II(r_i)(v,v')$ iff $s'$ is a fair state and $s$ satisfies $J_i$ or $s'$ is the successor of $s$ on the shortest
path $\pi$ leading from $s$ to a fair state $s''$  satisfying $J_i$.\qed
\end{itemize}
\end{proof}

\section{Case study 1: $CTL^*$ verification of a robot system}

\begin{wrapfigure}{l}{0.45\textwidth}
    \input{figures/robot}
    \caption{Robots program R}
    \label{robots}
    \end{wrapfigure}

Suppose we are given a system with three robots placed on rational points of a two-dimensional plane. A user makes an ongoing interaction with the system by sending commands. The system processes the commands and moves the robots accordingly. 

We wish to verify two properties. The first property states that the three robots are never located in the same position at the same time. The second property states that for each pair of robots, the user can manipulate the robots in such a way that this pair meets infinitely often.

Program $R$, presented in Figure~\ref{robots}, describes the interaction between the user and the system. Program $R$ consists of a while-loop. At the beginning of each iteration, three variables $robot\_id$, $a$ and $b$ are assigned non-deterministically. These are the inputs from the user. Variable $robot\_id$ represents the identification of the robot that the user wishes to move. Variables $a$ and $b$ represent the moving instruction, given by the user to the selected robot.

Variables $x_i$ and $y_i$ represent the current position of robot $i$ on the plane.

If the user chooses robot $i$, then the system changes its position according to the linear transformation defined for robot $i$, based on $a$ and $b$.
For each robot, a different linear transformation is being used. For example, for robot 2 the transformation is $(a+b,-2a-2b)$.

Let us denote the set of variables of $R$ (including the variable $pc$) as $v$ and the transition relation of $R$ as $next(v,v')$.
The initial positions of the robots are given by the initial condition: 
$init(v) = (x_1 = y_1 = x_2 = y_2 = x_3 = 0 )\wedge (y_3 = 2)$.
The property we wish to verify is given by the $CTL^*$ formula $\phi$.
$$\phi = AG(safe(v)) \wedge EG(F meet_{1,2}(v)) \wedge EG (F meet_{2,3}(v)) \wedge EG (F meet_{1,3}(v)),$$
where the formula $safe(v)$ states that the three robots are not all in the same position on the plane. Formula $meet_{i,j}(v)$ states that robot $i$ and robot $j$ are in the same position\footnote{$safe(v) = (x_1\neq x_2) \vee (x_1\neq x_3) \vee (x_2\neq x_3) \vee (y_1\neq y_2) \vee (y_1\neq y_3) \vee (y_2\neq y_3)$ and $meet_{i,j}(v) =(x_i = x_j)\wedge (y_i = y_j)$.}.


The first conjunct of $\phi$ ensures that the three robots are never in the same position at the same time. The second conjunct ensures that for every pair of robots, there is a path along which they meet infinitely often.


Next, we demonstrate how to generate the set of EHCs for the verification problem of $R \models \phi$, using \tool.

We first need to use rule~\ref{rule1} four times, one for each basic state subformula. As the result, we get the following
$$Clauses(R,\phi) = Clauses(v,aux_1(v),next(v,v'), \emptyset, AG(safe(v)))\cup \dots$$  $$\dots \cup Clauses(v,aux_4(v),next(v,v'), \emptyset, EG(Fmeet_{2,3}(v))) \cup$$ $$   \cup Clauses(R, aux_1(v) \wedge aux_2(v) \wedge aux_3(v) \wedge aux_4(v) )$$
The first set of clauses requires that in every state where $aux_1(v)$ is true, subformula $AG(safe(v))$ is satisfied. Similar requirements are presented in the next three sets.
The last set guarantees that in the initial states of $R$ all four $aux_i$ predicates are satisfied. 

To deal with the first set we can use rule~\ref{rule2} to turn $A$ to $A_f$ and then rule~\ref{rule4} to eliminate the temporal operator. This results in
$$Clauses(v\cup \{ x_{G} \}, aux_1(v), next(v,v')\wedge (x_{G} = safe(v)\wedge x_G'), \{ x_G\vee \neg safe(v) \}, A_f x_{G}).  $$

Finally, we use rule~\ref{rule6} to get the explicit set of clauses.
In the Appendix \ref{app:robot} 
we explain  why the resulting set of clauses is satisfiable.
  
For the second, third and fourth sets, the generation of clauses is similar. We  present here only the fourth set of clauses.

Rewrite the formula to replace $F$ with $U$:
$EG(Fmeet_{2,3}(v)) = EG(True U meet_{2,3}(v))$.
Next, apply rule~\ref{rule2} to turn
$E$ to $E_f$. Then apply rule~\ref{rule5} to eliminate the temporal operator $U$. This results in the following set of clauses.
$$Clauses(v,aux_4(v),next(v,v'), \emptyset, EG(Fmeet_{2,3}(v))) =$$ $$= \{ aux_4(v) \-> \exists x_U : aux_5(v,x_U) \}  \cup Clauses(v\cup \{ x_{G} \},aux_5(v,x_U),$$  $$next(v,v') \wedge (x_{U} = (meet_{2,3}(v) \vee x_U')), \{ \neg x_U \vee meet_{2,3}(v) \}, EG(X_U)).$$
As the last two steps, we apply rule~\ref{rule4} to eliminate the temporal operator $G$ and rule~\ref{rule7} to get the explicit set of clauses.

\noindent More details are given in the Appendix \ref{app:robot}.

\section{$CTL^*$ synthesis from partial programs}


In this section, we investigate the problem of $CTL^*$ synthesis from partial programs, which include holes of two types -- condition holes and assignment holes.

As in the previous part, a program is presented as a transition system with fairness conditions. Additionally,  we have the variable $pc$, ranging over program locations. The set of program variables is now presented as $v=\{pc\} \cup v_r$,  where $v_r$ is the set of all program variables except $pc$.

\textit{A partial program} is a  pair $\langle P,H \rangle$, where $P$ is a program and $H$ is a finite set of holes in $P$. There are two types of holes.

\textit{A condition hole} is a triple of locations $\langle l, l_t, l_f \rangle$ -- the first element represents the location where we would like to insert a condition. If this condition is satisfied by a certain state at location $l$, then the synthesized program moves from location $l$ to location $l_t$, otherwise, it moves to location $l_f$.   

\textit{An assignment hole} is a pair of locations $\langle l, l'\rangle$. In this case, we want to synthesize an assignment in the form of a relation $\alpha(v_r,v_r')$ over pairs of states such that the synthesized program moves from location $l$ to location $l'$, assigning variables according to $\alpha$.

A \textit{resolving function} for a partial program $\langle P, H \rangle$ is a function $\Psi$ that maps every condition hole to a subset of $\DD^{|v_r|}$, and every assignment hole to a subset of $\DD^{|v_r|} \times \DD^{|v_r|}$\  \footnote{Recall that $\DD$ is the domain over which interpretations are defined.}, such that for every assignment hole $\langle l,l' \rangle $ the following holds:
$$\forall v_r \exists v_r': \Psi(\langle l,l' \rangle) (v_r,v_r')$$. 

The constraint above characterizes a non-deterministic multiple assignment.
Using constraints to characterize the assignment holes gives us flexibility in the synthesis.
For example, we could alternatively require that the assignment changes exactly one variable, while all others preserve their previous value.

For a partial program $\langle P,H \rangle$ and a resolving function $\Psi$ the \textit{synthesized program} is a program $P_\Psi= \langle v,init(v),next_\Psi(v,v'), \mathcal{J}\rangle$, where
$$next_{\Psi}(v,v') = \bigvee_{\langle l,l_t,l_f \rangle \in H} (pc=l\wedge pc'=l_t \wedge \Psi(\langle l,l_t,l_f \rangle)(v_r) \wedge v_r'=v_r) \vee$$ $$\bigvee_{\langle l,l_t,l_f\rangle\in H} (pc=l\wedge pc'=l_f \wedge \neg \Psi(\langle l,l_t,l_f \rangle)(v_r) \wedge v_r'=v_r)  \vee$$ $$\bigvee_{\langle l,l'\rangle \in H} (pc=l\wedge pc'=l' \wedge \Psi(\langle l,l' \rangle)(v_r,v_r')) 
\vee next(v,v').$$

\noindent So basically $P_\Psi$ is $P$ with the holes filled according to $\Psi$.

\textbf{The $CTL^*$ synthesis problem} is given a partial program $\langle P,H\rangle$ and a $CTL^*$ formula $\phi$. A solution to the $CTL^*$ synthesis problem is a resolving function $\Psi$, such that the synthesized program $P_\Psi$ satisfies $\phi$. If such a resolving function exists we say that the problem is \textit{realizable}. Otherwise, the problem is \textit{unrealizable}. 

\subsection{Reduction of $CTL^*$ synthesis to EHC satisfiability}

Our goal is to generate a set of EHCs for a given synthesis problem, such that the set of EHCs is satisfiable if and only if there is a solution to the synthesis problem, exactly as we did for the verification problem. 

Actually, the verification problem is a special case of the synthesis problem, so it is not  surprising that we use an extension of our approach to $CTL^*$ verification to perform $CTL^*$ synthesis. The main idea of our algorithm is to fill holes with uninterpreted predicates, apply our reduction for $CTL^*$ verification and add some additional clauses to the resulted set of EHCs, in order to set some constraints on uninterpreted predicates in holes.

Formally, for a given synthesis problem consisting of a partial program $\langle P,H\rangle$ and a
$CTL^*$ formula $\phi$, we define the set of uninterpreted predicates $U$ as follows:
$U := \{ u^c_{l}(v_r) | \langle l,l_t,l_f \rangle \in H \} \cup \{ u^a_l(v_r,v_r') | \langle l, l' \rangle \in H \}$.

Basically, $U$ contains a predicate over $v_r$ for every condition hole and a predicate over $v_r,v_r'$ for every assignment hole.

Intuitively, $u^c_{l}$  and $\neg u^c_{l}$ represent the sets of states from which the program moves from $l$ to $l_t$   and $l_f$, respectively. 

The predicate $u_l^a$ intuitively represents the set of pairs of states $s,s'$ (with locations $l$ and $l'$), such that the program moves from $s$ to $s'$. We need to require that for every state $s$ in $l$ there exists a state $s'$ in $l'$, such that the program moves from $s$ to $s'$. For that purpose, we introduce the following set of clauses
$\Delta_a(P,H) := \{ \top \-> \exists v_r': u^a_{l} (v_r,v_r') \; | \; \langle l,l'\rangle \in H \}$.

Let us define the program $P_U$, which is the program $P$ with holes filled with uninterpreted predicates. Formally $P_U= \langle v,init(v),next_U(v,v'), J 
\rangle$, where 
$$next_{U}(v,v') = next(v,v') \vee \bigvee_{\langle l,l_t,l_f \rangle \in H} (pc=l\wedge pc'=l_t \wedge u^c_{l}(v_r) \wedge v_r'=v_r) \vee$$  $$\vee \bigvee_{\langle l,l_t,l_f\rangle\in H} (pc=l\wedge pc'=l_f \wedge \neg u^c_{l}(v_r)  \wedge v_r'=v_r) \vee \bigvee_{\langle l,l'\rangle \in H} (pc=l\wedge pc'=l' \wedge u^a_{l}(v_r,v_r') )\footnote{Recall that we can express the negation of a predicate using EHCs. See Section~\ref{sectionEHC}}.$$

\noindent Finally, the desired set of clauses is 
$\Delta(P,H,\phi) := \Delta_a(P,H) \cup Clauses(P_U, \phi)$.
\vspace{0.1cm}
\begin{theorem}[soundness]\label{synthlemma}
    Suppose for a given $CTL^*$ synthesis problem $\langle P,H,\phi \rangle$, a set of clauses $\Delta(P,H,\phi)$ produced by $\tool$ is satisfiable by interpretation $\II$. Let $\Psi$ be a resolving function for $\langle P,H \rangle$ defined by:  

    For every condition hole $\langle l,l_t,l_f \rangle \in H$, \ 
    $\Psi(\langle l,l_t,l_f \rangle)(v_r) = \II(u^c_l)(v_r)$.

    For every assertion hole $\langle l,l' \rangle \in H$, \ 
    $\Psi(\langle l,l' \rangle)(v_r,v_r') = \II(u^a_{l})(v_r,v_r')$.

\noindent Then $P_\Psi$ satisfies $\phi$.
\end{theorem}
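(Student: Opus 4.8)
The plan is to reduce the soundness of the synthesis translation to the already-established soundness of the verification translation (Theorem~\ref{soundness}). The key observation is that $\Delta(P,H,\phi) = \Delta_a(P,H) \cup Clauses(P_U, \phi)$, so if $\II$ satisfies this whole set, then in particular $\II$ satisfies $Clauses(P_U,\phi)$. By Theorem~\ref{soundness}, this gives us $P_U^{\II} \models \phi$, where $P_U^{\II}$ denotes the program $P_U$ with every uninterpreted predicate (the hole-predicates in $U$, and any auxiliary predicates introduced by $\tool$) interpreted according to $\II$. So the whole argument hinges on showing that $P_U^{\II}$ and $P_\Psi$ are, for the purposes of $CTL^*$ satisfaction, \emph{the same program}.

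First I would compare the transition relations. The synthesized program has transition relation $next_\Psi(v,v')$, obtained from $next(v,v')$ by adding, for each condition hole $\langle l,l_t,l_f\rangle$, the disjuncts $(pc=l\wedge pc'=l_t\wedge \Psi(\langle l,l_t,l_f\rangle)(v_r)\wedge v_r'=v_r)$ and $(pc=l\wedge pc'=l_f\wedge \neg\Psi(\langle l,l_t,l_f\rangle)(v_r)\wedge v_r'=v_r)$, and for each assignment hole $\langle l,l'\rangle$ the disjunct $(pc=l\wedge pc'=l'\wedge \Psi(\langle l,l'\rangle)(v_r,v_r'))$. On the other side, $P_U^{\II}$ has transition relation $next_U(v,v')$ with exactly the same disjuncts but with $u^c_l$ and $u^a_l$ in place of the $\Psi$-images. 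By the definition of $\Psi$ in the statement, $\Psi(\langle l,l_t,l_f\rangle)(v_r) = \II(u^c_l)(v_r)$ and $\Psi(\langle l,l'\rangle)(v_r,v_r') = \II(u^a_l)(v_r,v_r')$, so substituting $\II$ into $next_U$ yields precisely $next_\Psi$. The initial condition $init(v)$ and the fairness set $J$ are identical in both programs. Hence $P_\Psi$ and $P_U^{\II}$ have literally the same transition system, and therefore satisfy exactly the same $CTL^*$ formulas; in particular $P_\Psi \models \phi$.

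The one point that requires genuine care — and which I expect to be the main obstacle — is the role of $\Delta_a(P,H)$: we must make sure $\Psi$ as defined is actually a legitimate \emph{resolving function}. Recall a resolving function must satisfy, for every assignment hole $\langle l,l'\rangle$, the totality condition $\forall v_r\,\exists v_r': \Psi(\langle l,l'\rangle)(v_r,v_r')$. This is exactly what the extra clauses $\Delta_a(P,H) = \{\top \-> \exists v_r': u^a_l(v_r,v_r') \mid \langle l,l'\rangle \in H\}$ enforce under the interpretation $\II$: since $\II$ satisfies every clause in $\Delta_a(P,H)$, we get $\forall v_r\,\exists v_r': \II(u^a_l)(v_r,v_r')$, which is precisely $\forall v_r\,\exists v_r': \Psi(\langle l,l'\rangle)(v_r,v_r')$. (For condition holes there is no well-definedness constraint beyond being a subset of $\DD^{|v_r|}$, which $\II(u^c_l)$ trivially is.) So $\Psi$ is a valid resolving function, and combined with the transition-relation identity from the previous paragraph we conclude $P_\Psi = P_U^{\II} \models \phi$, as required.

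One should also remark on a subtlety with the auxiliary predicates that $\tool$ introduces inside $Clauses(P_U,\phi)$ (the predicates $aux$, $p$, $r$, $t$, $q_i$, $r_i$ of Rules~\ref{rule1}--\ref{rule7}): these are \emph{not} part of $U$ and do not appear in $next_U$, so interpreting them via $\II$ does not affect the program $P_U^{\II}$ at all — it only affects whether the clause set is satisfied, which is exactly the hypothesis we are feeding into Theorem~\ref{soundness}. Thus no extra bookkeeping is needed, and the proof is essentially a matter of unwinding the definitions of $next_\Psi$, $next_U$, $\Psi$, and $\Delta_a(P,H)$ and then invoking Theorem~\ref{soundness} as a black box.
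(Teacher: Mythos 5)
Your proposal is correct and follows essentially the same route as the paper's proof: verify via $\Delta_a(P,H)$ that $\Psi$ is a legitimate resolving function, observe that substituting $\II$ restricted to the hole-predicates into $next_U$ yields exactly $next_\Psi$ (so the interpreted $P_U$ \emph{is} $P_\Psi$), and then invoke Theorem~\ref{soundness} on the remaining clauses. The only cosmetic difference is that the paper makes the substitution step explicit by splitting $\II$ into $I_1\cup I_2$ and using the identity $I_1(Clauses(P_U,\phi))=Clauses(P_\Psi,\phi)$ before applying Theorem~\ref{soundness} with $I_2$, which is the precise form of the commutation you invoke implicitly.
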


\begin{theorem}[Relative completeness] \label{syncompleteness}
    Given a realizable $CTL^*$ synthesis problem $\langle P,H,\phi \rangle$, then every set of clauses $\Delta(P,H,\phi)$ produced by $\tool$ is satisfiable.  
\end{theorem}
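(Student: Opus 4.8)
The plan is to mirror the structure of Theorem~\ref{synthlemma}, but in the reverse direction: given a realizable synthesis problem, we are handed a resolving function $\Psi$ witnessing realizability, and we must construct an interpretation satisfying $\Delta(P,H,\phi) = \Delta_a(P,H) \cup Clauses(P_U,\phi)$. The key observation is that $\Psi$ determines, in an obvious way, an interpretation of the hole predicates $U$: set $\II(u^c_l)(v_r) := \Psi(\langle l,l_t,l_f\rangle)(v_r)$ for every condition hole and $\II(u^a_l)(v_r,v_r') := \Psi(\langle l,l'\rangle)(v_r,v_r')$ for every assignment hole. Call this partial interpretation $\II_1$. First I would check that $\II_1$ already satisfies $\Delta_a(P,H)$: the clauses $\top \-> \exists v_r' : u^a_l(v_r,v_r')$ hold precisely because $\Psi$ is a resolving function, i.e. $\forall v_r \exists v_r' : \Psi(\langle l,l'\rangle)(v_r,v_r')$ is part of the definition of a resolving function.

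Next I would observe, exactly as in the (commented-out) proof of Theorem~\ref{synthlemma}, that applying $\II_1$ to the transition relation $next_U$ yields $next_\Psi$, i.e. $\II_1(next_U)(v,v') = next_\Psi(v,v')$, and hence $\II_1(P_U) = P_\Psi$. Therefore applying $\II_1$ to the clause set commutes with the $Clauses$ construction: $\II_1(Clauses(P_U,\phi)) = Clauses(\II_1(P_U),\phi) = Clauses(P_\Psi,\phi)$. Since the synthesis problem is realizable via $\Psi$, the program $P_\Psi$ satisfies $\phi$. By Theorem~\ref{completeness} (relative completeness of the verification reduction), every set of clauses produced by $\tool$ for $Clauses(P_\Psi,\phi)$ is satisfiable by some interpretation $\II_2$ of the remaining uninterpreted predicates (those not in $U$). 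Then $\II := \II_1 \cup \II_2$ satisfies both $\Delta_a(P,H)$ (via $\II_1$, which fixes all predicates appearing there) and $Clauses(P_U,\phi)$ (since under $\II$ this set becomes $Clauses(P_\Psi,\phi)$, satisfied by $\II_2$), so $\II$ satisfies $\Delta(P,H,\phi)$, as required.

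One subtlety worth spelling out is that $\tool$ may produce several different clause sets for the same problem (Remark~\ref{remarkNonUnique}), so the statement quantifies over \emph{all} of them; this is fine because the argument above reduces, for each choice made by $\tool$, to the corresponding choice in $Clauses(P_\Psi,\phi)$, and Theorem~\ref{completeness} already quantifies over all sets $\tool$ may produce there. The main obstacle — really the only place requiring care — is the bookkeeping around splitting the uninterpreted predicates into the hole predicates $U$ and the predicates introduced by the verification rules, and verifying that these two families are disjoint so that $\II_1$ and $\II_2$ can be combined without conflict. Since the rules of $\tool$ always introduce \emph{fresh} predicates ($aux$, $p$, $r$, $t$, $q_i$, $r_i$, etc.), disjointness from the finitely many fixed predicates in $U$ holds automatically, so this step is routine. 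I would conclude by noting that relative completeness is inherited exactly as in the verification case: the interpretation $\II_2$ supplied by Theorem~\ref{completeness} need not be first-order expressible in the underlying theory.
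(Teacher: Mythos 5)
Your proposal is correct and follows essentially the same route as the paper's proof: interpret the hole predicates via $\Psi$ to get $\II_1$, observe that $\II_1(P_U)=P_\Psi$ so that $\II_1(Clauses(P_U,\phi))=Clauses(P_\Psi,\phi)$, invoke Theorem~\ref{completeness} to obtain $\II_2$, and combine. Your additional remarks on the non-uniqueness of the clause sets and the freshness-based disjointness of the two predicate families are points the paper leaves implicit, but they do not change the argument.
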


\noindent The proofs appear in the Appendix \ref{append:SyntProofs}.
\section{Case study 2: Synthesizing a bank-client interaction}

Suppose we would like to write a program that will determine the rules for an ongoing interaction between a client and a bank. The goal is to define a process that is satisfactory for both the client and the bank. Specifically, we would like to complete the given sketch on 
Figure~\ref{fig:bank}(a) according to a given specification written as a $CTL^*$ formula (see below).

The interaction  between the client and the bank is an ongoing iterative process, defined over the variables $req$ (for request), $bal$ (for balance), $exp$ (for expenses) and $pro$ (for profit). At each iteration the client may request to  apply either a withdrawal (in which case $req \geq 0$) or a deposit (where $req < 0$) to his account.
$bal$ represents the balance (amount of money) in the client's account. $exp$ represents the accumulative  withdrawals made by the client up to this iteration, and $pro$ holds the profit accumulated by the bank up to this iteration.

Initially, $bal = exp = pro = 0$.

According to the sketch in Figure~\ref{fig:bank}(a), after an unspecified check on line $l_c$, on line $l_3$ the client's balance is updated according to $req$. 
Then the computation is split to the case where request is a deposit (line $l_{a_1}$) or a withdrawal (line $l_5$). 
In both cases the associated assignments are unspecified, but
in the latter case, $exp$ is updated according to the current withdrawal.
Finally, on line $l_6$ the bank's profit is updated, decreasing it by \$$0.1$, regardless of the action that has been applied.
This is because the bank has to transfer a fee of \$$0.1$ to the central bank on each transaction it makes.

The CTL$^*$ specification formula $\phi$ is given as follows:
$$\phi = A(G(exp<1000) \vee FG(pro>50)) \wedge EF((exp \geq 100)\wedge pro< 7).$$

The first conjunct of $\phi$ requires that for every path either $exp$ is always less than \$$1000$ or from some moment on, $pro$ is greater than \$$50$ forever. It assures that if the amount of money withdrew by the client is sufficiently large, then $pro$ is also large.

The second conjunct assures that the client can withdraw at least \$$100$ from his account, while not paying the bank more than \$$7$. 

The entire $\phi$ ensures that the interaction between the client and the bank, performed according to the synthesized program, is acceptable for the bank (first conjunct) and for the client (second conjunct).

Note that the first conjunct also implies that the fees of \$$0.1$ transferred from the bank to the central bank (line $l_6$), are collected from the client. Otherwise, the bank would not be able to maintain  $pro>50$, as it might lose profit on client's actions.



The set of variables of $B$ is $v = \{ pc,req, bal, exp, pro \}$, the transition relation is $next(v,v')$ and the initial condition is $init(v) = (bal = exp = pro = 0)$. The set of holes is $H = \{l_c, l_{a_1}, l_{a_2} \}$.
We use $\tool$ to generate the set of clauses $\Delta(B,H,\phi)$ for the synthesis problem $\langle B, H, \phi \rangle$. 

The set of uninterpreted predicates is
$U = \{ u_{l_c}^c(v_r), u_{l_{a_1}}^a(v_r,v_r'), u_{l_{a_2}}^a(v_r,v_r') \}.$
The program $B_U= \langle v,init(v),next_U(v), \emptyset \rangle$, where 
$$next_{U}(v,v') =   (pc=l_c\wedge pc'=l_3 \wedge u^c_{l_c}(v_r) \wedge v_r'=v_r) \vee $$ $$ \vee   (pc=l_c\wedge pc'=l_2 \wedge \neg u^c_{l_c}(v_r)  \wedge v_r'=v_r) \vee $$  $$  \vee  (pc=l_{a_1}\wedge pc'=l_6 \wedge u^a_{l_{a_1}}(v_r,v_r') ) \vee (pc=l_{a_2}\wedge pc'=l_6 \wedge u^a_{l_{a_2}}(v_r,v_r') ) \vee next(v,v').$$
By definition,
$\Delta(B,H,\phi) = \Delta_a(B,H) \cup Clauses(B_U, \phi)$.
Since we have two assignment hole, 
$\Delta_a(P_T,H) = \{ \top \-> \exists v_r': u_{l_{a_1}}^a(v_r,v_r') , \top \-> \exists v_r': u_{l_{a_2}}^a(v_r,v_r') \}$.
The holes can be filled as presented in Figure~\ref{fig:bank}(b), resulting in program $B_S$.
     

    \begin{figure}%
    \centering
    \subfloat[\centering Partial program $B$]{{\input{figures/bankpartial} }}%
    \qquad
    \subfloat[\centering Synthesized program $B_S$]{{\input{figures/bankcomplete} }}%
    \caption{Synthesizing bank-client interaction}%
    \label{fig:bank}%
\end{figure}

The synthesized condition on line $l_c$ assures that the client has enough money to pay for his request and for the cost of transaction, \$$0.1$, so that the bank would not lose its profit.

On lines $l_{a_1}$ and $l_{a_2}$ of $B_S$ the bank collects \$$0.1$ from the client's account. Additionally, on line $l_{a_2}$ the bank collects $6\%$ of $req$, thus 
$pro$ is at least $6\%$ of $exp$ and it never decreases. The first conjunct of $\phi$ is then satisfied.

To see that the second conjunct holds, consider a scenario in which the client deposits \$$110$ and then withdraws \$$100$. After this transaction $exp = \$100$ and $pro = \$6$. Hence, the second conjunct of $\phi$ is satisfied.
Theorem~\ref{syncompleteness} now guarantees that the set $\Delta(B,H, \phi)$ is satisfiable.

\section{Conclusion}

In this work, we exploit the power of EHCs for the verification and synthesis of infinite-state programs with respect to $CTL^*$ specification.
The algorithms are both sound and relative complete. The relative completeness is based on semantic interpretation that might not be expressible in a specific syntax. However, this is the most general way to describe the desired interpretation.
For synthesis, for instance, when we conclude that the synthesis problem is unrealizable, then indeed no suitable solution exists, regardless of the syntax.

Our translation is independent of the solver or the underlying theory. If, in the future, a new EHC solver for any theory is designed, it can be combined with our translation system to perform $CTL^*$ verification and partial synthesis.

\newpage
\bibliographystyle{splncs04}
\bibliography{ppapers}

\clearpage
\appendix
\section{Appendix}

\subsection{Proof of complexity theorem}

\begin{theorem}[complexity] 
Let $(D,\phi)$ be a verification problem, with a program of size $m$, with $k$ fairness conditions, and a specification formula of size $n$. Then $\tool$ produces a set $Clauses(v,init(v),next(v,v'),J,\phi)$ consisting of $O(n(n+k))$ clauses with a total size of $O(n(n+k)(n+m))$.
\end{theorem}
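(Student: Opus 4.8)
The plan is to prove the complexity bound by induction on the structure of $\phi$, tracking how each of the eight translation rules affects three quantities: the number of clauses produced, the total size of the clauses, and the ``accumulated'' sizes of the program and fairness set that are passed down in recursive calls. The key observation is that rules \ref{rule3}, \ref{rule4}, \ref{rule5} each append one new element to $J$ and enlarge $\mathit{next}$ by an amount linear in the size of the eliminated assertion, so after eliminating all temporal operators inside a basic state formula of size $n$ (starting from a program of size $m$ with $k$ fairness conditions), the fairness set has size $O(n+k)$ and the transition relation has size $O(n+m)$.

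First I would establish the base case for \emph{basic} state formulas. If $\phi$ is a basic state formula $Q_f\psi$, then $\tool$ eliminates the temporal operators one at a time via rules \ref{rule3}--\ref{rule5} (each invocation may also emit one clause of size $O(m)$ for the existential-quantifier case, hence at most $n$ such clauses of total size $O(n(n+m))$), possibly after a single application of rule \ref{rule2}, and finally applies rule \ref{rule6} or rule \ref{rule7}. Rule \ref{rule6} emits a constant number of clauses; rule \ref{rule7} emits $O(k')$ clauses where $k'=O(n+k)$ is the enlarged fairness-set size; in both cases the individual clauses have size $O(m'+n)$ where $m'=O(n+m)$, since the largest clause mentions the (enlarged) transition relation or iterates over fairness conditions. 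Adding up, a basic state formula yields $O(n+k)$ clauses of total size $O((n+k)(n+m))$. Rule \ref{rule2} is neutral: it recurses on a problem of the same size.

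Next I would handle the inductive step via rule \ref{rule1}, which is the only rule that splits the problem. Given $\phi=\phi_2(\phi_1)$ with $\phi_1$ a basic state subformula, rule \ref{rule1} makes two recursive calls: one on $(v,\mathit{aux}(v),\mathit{next},J,\phi_1)$ — a basic state formula, handled by the base case — and one on $(D,\phi_2(\mathit{aux}))$, which has strictly fewer path quantifiers. Iterating rule \ref{rule1} decomposes $\phi$ into at most $n$ basic-state subproblems, each sharing the original program (size $m$) and fairness set (size $k$) and each with a specification of size $O(n)$. By the base-case bound, each contributes $O(n+k)$ clauses of total size $O((n+k)(n+m))$, and summing over the $\leq n$ subproblems gives $O(n(n+k))$ clauses of total size $O(n(n+k)(n+m))$, as claimed. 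Termination (already argued in the excerpt) guarantees the induction is well-founded.

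The main obstacle I expect is bookkeeping the \emph{cumulative} growth of $J$ and $\mathit{next}$ correctly: one must be careful that rules \ref{rule3}--\ref{rule5} enlarge these structures only \emph{additively} per temporal operator (not multiplicatively), and that when rule \ref{rule1} spawns a subproblem for $\phi_1$, the program and fairness set it passes are still the \emph{original} ones of sizes $m$ and $k$ — not an already-enlarged version — so the $O(n+k)$ and $O(n+m)$ estimates for each basic subproblem are uniform. A secondary subtlety is the $O(n(n+m))$ contribution from the per-step clauses emitted during temporal elimination in the existential cases; these must be folded into the total-size bound, but they are dominated by $O(n(n+k)(n+m))$ and so do not change the final estimate.
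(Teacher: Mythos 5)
Your proposal follows essentially the same route as the paper's proof: first bound the basic-state-formula case by tracking the additive growth of $J$ and $next$ under rules~\ref{rule3}--\ref{rule5} (including the extra per-step clauses from the existential cases) and the output of rules~\ref{rule6}/\ref{rule7}, treat rule~\ref{rule2} as size-neutral, and then use rule~\ref{rule1} to decompose a general formula into at most $n$ basic subproblems over the original program and fairness set. The bookkeeping and the resulting bounds match the paper's argument.
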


\begin{proof}
At first, we prove that in the case when $\phi$ is a basic state formula, the number of clauses and the overall size are $O(n+k)$ and $O((n+k)(n+m))$ respectively. At the end of the generation process $\tool$ applies rule~\ref{rule6} or rule~\ref{rule7}. Rule~\ref{rule6} produces constant number of clauses, rule~\ref{rule7} produces $O(k)$ clauses and in both cases overall size is $O(n+mk)$. 

Before the last step $\tool$ applies rules~\ref{rule3}, \ref{rule4} and \ref{rule5}, replacing assertions inside the formula with boolean variables and extending the set of fairness conditions (by one element on each step) and transition relation (linearly from the size of the replaced assertion). Hence, by the last step of the algorithm, the sizes of the set of fairness constraints and the program are $O(n+k)$ and $O(n+m)$ respectively, hence the number of the clauses in the resulting set and the overall size are $O(n+k)$ and $O((n+k)(n+m))$. Also, during the implication of rules~\ref{rule3}, \ref{rule4} and \ref{rule5} $\tool$ may produce one clause per step of the size $m$, hence not more than $n$ clauses with overall size $n(n+m)$.

Rule~\ref{rule2} makes a recursive call of $Clauses$ for the problem of the same size, hence the bounds are preserved.

To deal with the case of non-basic specification formula $\phi$ we need to examine Rule~\ref{rule1}. It takes a state basic subformula of the specification formula of the size $l$ and makes two recursive calls. One for the problem with sizes of the parameters $m,n-l,k$ and one for the problem with sizes of the parameters $m,l,k$ and with the basic state specification formula. Eventually, after numerous implications of rule~\ref{rule1} $\tool$ reduces our problem to not more than $n$ subproblems with the same sizes of program and the set of fairness constraints and smaller specification state formulas. Hence, $\tool$ produces $O(n(n+k))$ clauses with the overall size $O(n(n+k)(n+m))$. \qed

\end{proof}

\subsection{Full proof of soundness theorem}\label{append:sound}

\begin{theorem}[\textit{soundness}] 

For every set of clauses produced by $\tool$ for $Clauses(v, init(v), next(v,v'), J, \phi )$,  if the set is satisfiable then the program $D = (v, init(v), next(v,v'), J )$ satisfies $\phi$.
 
\end{theorem}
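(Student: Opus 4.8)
The plan is to prove the statement by structural induction on the verified formula $\phi$, exactly along the eight translation rules that define $Clauses$. The key observation (already recorded in the excerpt) is that a $CTL^*$ state formula holds in $D$ iff $init(v) \rightarrow \phi$, so all reasoning about $D \models \phi$ reduces to reasoning about initial states and the paths emanating from them. The induction is well-founded because every rule other than Rule~\ref{rule2} recurses on a strictly smaller verification formula, and Rule~\ref{rule2} (which merely swaps $Q$ for $Q_f$ and drops the fairness set) can fire only once per non-fair quantifier, so it can be folded into the ordering by declaring $Q_f\phi$ smaller than $Q\phi$.

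For the base case I would take Rule~\ref{rule8}: if $Clauses(D,c) = \{init(v) \rightarrow c(v)\}$ is satisfiable, then $init(v) \rightarrow c(v)$, which is exactly $D \models c$. For the induction step I would treat each rule in turn, and in each case the argument has the same shape: assume a satisfying interpretation $\II$ of the produced clause set, read off from $\II$ the interpretations of the fresh predicates, feed the sub-clause-set to the induction hypothesis to get a semantic fact about the modified program $D'$ (with the fresh boolean variable $x_X$/$x_G$/$x_U$ added to $v$ and the associated conjunct added to $next$, and possibly an extra fairness condition), and then transfer that fact back to $D$. The transfer works because along every fair path of $D$ there is a corresponding fair path of $D'$ on which the fresh boolean tracks exactly the truth value of the eliminated temporal subformula ($Xc$, $Gc$, or $c_1Uc_2$) at each position; the newly added fairness condition in Rules~\ref{rule4},~\ref{rule5} is precisely what forces this correspondence to be a genuine bijection on fair paths (e.g. $x_G \vee \neg c(v)$ rules out the spurious run where $c$ holds forever but $x_G$ is kept false). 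For Rule~\ref{rule1} the transfer is monotonicity of NNF formulas: $\II(aux)$ under-approximates the states satisfying $\phi_1$, hence $\phi_2(\II(aux))$ under-approximates $\phi_2(\phi_1)$, and $D \models \phi_2(\II(aux))$ gives $D \models \phi_2(\phi_1)$. For the existential variants of Rules~\ref{rule3}--\ref{rule5} the extra clause $init(v) \rightarrow \exists x : aux(v,x)$ supplies, for each initial state, a choice of initial value of the fresh boolean from which a witnessing fair path exists.

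The genuinely non-routine steps are Rules~\ref{rule6} and~\ref{rule7}, which is where I expect the main obstacle. For Rule~\ref{rule6} ($A_f c$) I would argue by contradiction: if $D \not\models A_f c$ there is an initial fair path $\pi$ with $\pi \not\models c$; the clauses force $p$ to hold along all of $\pi$, the transitive-closure clauses force $t$ to be the reachability relation, and then for any two of the infinitely many ``full cycles'' of $\pi$ through $J_1,\dots,J_{|J|}$ the clause defining $r$ forces the corresponding pair into $\II(r)$, producing an infinite $\II(r)$-chain that is closed under the transitive-chain property — contradicting disjunctive well-foundedness via Theorem~1 of~\cite{1319598}. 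The $J=\emptyset$ subcase is simpler: any infinite path from a $p$-state yields an infinite $r$-chain. For Rule~\ref{rule7} ($E_f c$) I would argue constructively in the other direction: from a satisfying $\II$, the clause $init(v) \rightarrow c(v) \wedge q_1(v)$ gives $c$ at every initial state and membership in $q_1$; then, using that each $\II(r_i)$ is well-founded (disjunctive well-foundedness plus transitivity, again by~\cite{1319598}), the clause $q_i(v) \rightarrow \exists v'\,(\dots)$ can be unrolled to show that from any $q_i$-state one reaches in finitely many $next$-steps a state satisfying $J_i$ from which one moves into $q_{(i\%|J|)+1}$; chaining these segments for $i = 1, 2, \dots, |J|, 1, 2, \dots$ builds an infinite path visiting every $J_i$ infinitely often, i.e. a fair path starting at the given initial state. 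Hence $D \models E_f c$. The care needed here is purely in making the ``unroll a well-founded existential clause into a finite path'' argument precise and in checking the modular indexing yields true fairness; everything else is bookkeeping.
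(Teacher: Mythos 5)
Your proposal is correct and follows essentially the same route as the paper's proof: induction on the size of $\phi$ with $Q_f\phi$ deemed smaller than $Q\phi$, NNF-monotonicity for Rule~\ref{rule1}, the fresh-boolean path-correspondence (with the added fairness condition doing exactly the work you describe) for Rules~\ref{rule3}--\ref{rule5}, the infinite transitive $r$-chain contradicting disjunctive well-foundedness via Theorem~1 of~\cite{1319598} for Rule~\ref{rule6}, and the well-founded unrolling of the existential clauses to build a fair path for Rule~\ref{rule7}. No gaps beyond the bookkeeping you already flag.
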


\begin{proof}
We prove the theorem by induction on the size of the formula $\phi$. Recall that by definition, CTL$^*$ formulas are state formulas. As such, they are true in a program $D$ if all initial states of $D$ satisfy the formula. That is, $init(v) \rightarrow \phi$. We will use this fact in the proof.

\vspace{0.2cm}

\noindent {\bf Base}:
Rule \ref{rule8} forms the basis of the induction.
For an assertion $c$ in $\TT$, 
$Clauses(D, c )= \{ init(v) \-> c(v) \}$. If this set of clauses is satisfiable then, by definition of CTL$^*$ satisfiability, $D \models c$.
\vspace{0.2cm}

\noindent {\bf Induction step}:\\
{\bf Rule \ref{rule1}}:\\
Assume a set of clauses $Clauses(D, \phi_2(\phi_1))$ is satisfiable by an interpretation $\II$.
Then,  $\II$ satisfies $Clauses(v, aux(v), next(v,v'), J, \phi_1 )$ and by the induction hypothesis, $(v, \II(aux)(v), next(v,v'), J) \models \phi_1$. Thus, $I(aux)(v) \rightarrow \phi_1$. Also, $\II$ satisfies $Clauses(D, \phi_2(aux)(v))$ and by the induction hypothesis 
$D \models \phi_2(I(aux))$. Since $\phi_2$ is in Negation Normal Form (NNF), replacing all occurrences of $I(aux)$ in $\phi_2$ with $\phi_1$  results in a formula that is also true in $D$. We thus conclude that $D \models \phi_2(\phi_1)$, as required.

\vspace{0.2cm}
\noindent {\bf Rule~\ref{rule2}}:\\
Since the formula $Q\phi$ does not depend on the fairness conditions,
$(v, init(v), next(v,v'), J) \models Q\phi$ 
iff $(v, init(v), next(v,v'), \emptyset) \models Q_f\phi$.

Assume by way of contradiction that a set of clauses produced by $\tool$ for $Clauses(v, init(v), next(v,v'), J, Q\phi)$  is satisfiable but $(v, init(v), next(v,v'), J) \not\models Q\phi$. Then
$(v, init(v), next(v,v'), \emptyset) \not\models Q_f\phi$. By the induction hypothesis, since we consider $Q_f\phi$ to be smaller than $Q\phi$, every set of clauses produced by $\tool$ for $Clauses(v, init(v), next(v,v'), \emptyset, Q_f\phi)$ is unsatisfiable. Recall that Rule~\ref{rule2} defines $Clauses(v, init(v), next(v,v'), J, Q\phi)=Clauses(v, init(v), next(v,v'), \emptyset, Q_f\phi)$. Consequently, every set of clauses produced by $\tool$ for $Clauses(v, init(v), next(v,v'), J, Q\phi)$ is unsatisfiable as well. A contradiction.

\vspace{0.2cm}
\noindent We partition the proof for each of the rules \ref{rule3}, \ref{rule4} and \ref{rule5} into two cases --  universal and existential quantifiers.

\vspace{0.2cm}
\noindent {\bf Rule \ref{rule3}, Universal}: \\
We prove here that if $D= (v, init(v), next(v,v'), J)$ does not satisfy $A_f\phi(Xc)$ then for every set of clauses, produced by $\tool$ for $Clauses(v\cup \{ x_{X} \}, init(v), next(v,v')\wedge (x_{X} = c(v')), J,  A_f \phi(x_{X}))$, the set is unsatisfiable.

Suppose $D$ does not satisfy $A_f\phi(Xc)$. Then there is a fair path $\pi = s_1,s_2,\dots$ such that $D, s_1 \models init(v)$ and $D,\pi \not \models \phi(Xc)$.

Consider the path $\pi_X= s_1',s_2',\dots$ of the program $D_X = (v\cup \{ x_X \}, init(v), next(v,v')\wedge (x_{X} = c(v')), J)$, where $s_i' (v) = s_i (v)$ and $s_i'(x_X) = s_{i+1} (c(v))$. $\pi_X$ is a fair path of $D_X$, since the fairness conditions of $D$ and $D_X$ are identical and defined only over $v$ (and not over $x_X$). In addition, $D_X,\pi_X \not \models \phi(x_X)$, since $x_X$ and $Xc$ are true on exactly the same locations of the paths $\pi_X$ and $\pi$, respectively.

Hence, $D_X$ does not satisfy $A_f \phi(x_X)$. By the induction hypothesis, for every set of clauses produced by $\tool$ for $Clauses(v\cup \{ x_{X} \}, init(v), next(v,v')\wedge (x_{X} = c(v')), J,  A_f \phi(x_{X}))$, the set is unsatisfiable, as required.

\vspace{0.2cm}
\noindent {\bf Rule \ref{rule3}, Existential}: \\
Denote $D =  (v, init(v), next(v,v'), J)$. 
Suppose a set of clauses  $Clauses(D, E_f\phi(Xc))  = Clauses(v\cup \{ x_{X} \}, aux(v, x_X), next(v,v')\wedge (x_{X} = c(v')), J,  E_f \phi(x_{X})) \cup \{ init(v) \-> \exists x_X : aux(v,x_X) \}$ is satisfied by interpretation~$I$.

Denote $D_X= (v\cup \{ x_{X} \}, I(aux)(v, x_X), next(v,v')\wedge (x_{X} = c(v')), J)$. By the induction hypothesis, $D_X \models  E_f \phi(x_{X})$. 

Let $s$ be a state of $D$ such that $D, s\models init(v)$. Since the clause $ init(v) \-> \exists x_X : I(aux)(v,x_X) $ is true,  there is a state $s_X$ of $D_X$,  such that  $s_X(v)=s(v)$ and $D_X,s_X \models I(aux)(v, x_X)$. Thus, there is a fair path 
$\pi_X = s_X,s^X_2,s^X_3,\dots$ of $D_X$ such that $D_X,\pi_X \models \phi(x_X)$. Consider the path of $D$ defined as $\pi = s,s_2,s_3,\dots$, where each $s_i$ is the restriction of $s^X_i$ to the variables from $v$. Then $D,\pi \models \phi(Xc)$. This is because $x_X$ and $Xc$ are true on exactly the same locations of the paths $\pi_X$ and $\pi$, respectively.
We conclude that $D \models E_f \phi(Xc)$.

\vspace{0.2cm}
\noindent {\bf Rule \ref{rule4}, Universal}: \\
Suppose $D= (v, init(v), next(v,v'), J)$ does not satisfy $A_f\phi(Gc)$. Then there is a fair path $\pi = s_1,s_2,\dots$ such that $D, s_1 \models init(v)$ and $D,\pi \not \models \phi(Gc)$.

Consider the path $\pi_G= s_1',s_2',\dots$ of the program $D_G =(v\cup \{ x_{G} \}, init(v), next(v,v')\wedge (x_{G} = c(v)\wedge x_G'), J\cup \{ x_G\vee \neg c(v) \})$, where $s_i' (v) = s_i (v)$ and $s_i'(x_G) = true$ iff for every $j\geq i:$ $s'_j \models c(v)$. 
Note that, once $x_G$ is set to true on $\pi_G$, $c(v)$ is set to true. Moreover, they both remain true forever from that point on. 

$\pi_G$ is a fair path of $D_G$ since the fairness conditions of $D$ do not depend on $x_G$. The additional fairness condition of $D_G$ holds as well, since if $c(v)$  holds from a certain point on along $\pi_G$ then $x_G$ holds from that point on. Thus, $x_G\vee \neg c(v)$ holds  infinitely often along $\pi_G$.

Also $D_G,\pi_G \not \models \phi(x_G)$, since $x_G$ is true on the same locations of path $\pi_G$ as the locations of $\pi$ where $Gc$ is true.

Hence, $D_G$ does not satisfy $A_f \phi(x_G)$. By the induction hypothesis, every set of clauses produced by $\tool$ for 
$Clauses(D_G, A_f \phi(x_{X})) $
is unsatisfiable. Since $Closes(D, A_f\phi(Gc)) = Clauses(D_G, A_f \phi(x_{X}))$,
this implies that every set of clauses produced by $\tool$ for  $Closes(D, A_f\phi(Gc))$ is unsatisfiable, as required. 

\vspace{0.2cm}
\noindent {\bf Rule \ref{rule4}, Existential}: \\
Suppose a set $Clauses(v, init(v), next(v,v'), J, E_f \phi(Gc)) $ 
is satisfied by interpretation $\II$.
Denote $D_G= (v\cup \{ x_{G} \}, I(aux)(v,x_G), next(v,v')\wedge (x_{G} = c(v)\wedge x_G'), J\cup \{ x_G\vee \neg c(v) \})$. Then, by Rule~\ref{rule4}, $Clauses(D_G,E_f \phi(x_{G})) $ is also satisfied by $\II$. By the induction hypothesis $D_G \models  E_f \phi(x_{G})$. 

Let $s$ be a state of $D =  (v, init(v), next(v,v'), J)$, such that $D, s\models init(v)$. Since the clause $ \{ init(v) \-> \exists x_G : aux(v,x_G) \}$ is satisfied by interpretation $\II$, then there is a state $s_G$ of $D_G$  such that $s_G(v)=s(v)$ and $D_G,s_G \models \II(aux)(v, x_G)$. Moreover, there is a fair path $\pi_G = s_G,s'_2,s'_3,\dots$ of $D_G$ such that $D_G,\pi_G \models \phi(x_G)$. Consider the path of $D$ defined by $\pi = s,s_2,s_3\dots$, where $s_i(v)$ is a restriction of $s'_i$ to the variables from $v$. 
%
%
Then $D,\pi \models \phi(Gc)$. This is because $x_G$ is true on the same places of the path $\pi_G$ as the places of $\pi$ where $Gc$ is true.
Moreover, $\pi$ is a fair path.
Thus, $D$ satisfies $E_f \phi(Gc)$.

\vspace{0.2cm}
\noindent {\bf Rule \ref{rule5}, Universal}: \\
Suppose $D= (v, init(v), next(v,v'), J)$ does not satisfy $A_f\phi(c_1Uc_2)$, then there is a fair path $\pi = s_1,s_2,\dots$, such that $D, s_1 \models init(v)$ and $D,\pi \not \models \phi(c_1 U c_2)$.

Consider the path $\pi_U= s_1',s_2',\dots$ of the program $D_U = (v\cup \{ x_{U} \}, init(v), next(v,v')\wedge (x_{U} = c_2(v) \vee (c_1(v)\wedge x_U')), J\cup \{ \neg x_U \vee c_2(v) \})$, where $s_i' (v) = s_i (v)$ and $s_i'(x_U) = true$ iff $\pi^i$ satisfies $c_1 U c_2 $. $\pi_U$ is a fair path of $D_U$ since the fairness conditions from $D$ do not depend on the evaluation of $x_U$. Further, if $x_U$ is true from a certain point on along $\pi_U$, then $c_2$ must hold infinitely often. Thus, $\pi_U$ satisfies the additional fairness condition as well.
 
Also $D_U,\pi_U \not \models \phi(x_U)$, since $x_U$ is true on the same places of the path $\pi_U$ as the places of $\pi$ where $c_1 U c_2$ is true.

Hence $D_U$ does not satisfy $A_f \phi(x_U)$. By the induction hypothesis, every clause produced by $\tool$ for 
$ Clauses(v\cup \{ x_{U} \}, init(v), next(v,v')\wedge (x_{U} = c_2(v) \vee (c_1(v)\wedge x_U')), J\cup \{ \neg x_U \vee c_2(x) \},  A_f \phi(x_{U}))$ is unsatisfiable, as required.

\vspace{0.2cm}
\noindent {\bf Rule \ref{rule5}, Existential}: \\
Suppose a set $Clauses(v, init(v), next(v, v'), J, E_f\phi (c1 U c2))$
is satisfied by interpretation $I$.

Denote $D_U = (v\cup \{ x_{U} \}, init(v), next(v,v')\wedge (x_{U} = c_2(v) \vee (c_1(v)\wedge x_U')), J\cup \{ \neg x_U \vee c_2(x) \})$. By the induction hypothesis $D_U \models  E_f \phi(x_{U})$. 

Let $s$ be a state of $D =  (v, init(v), next(v,v'), J)$, such that $D, s\models init(v)$. Since the clause $ \{ init(v) \-> \exists x_X : aux(v,x_G) \}$ is satisfied by interpretation $I$, then there is a state  $s_U$ of $D_U$ such that $s_U(v)=s(v)$ and $D_U,s_U \models I(aux)(v, x_U)$. Moreover, there is a fair path $\pi_U = s_U,s'_2,s'_3,\dots$ of $D_U$ such that $D_U,\pi_U \models \phi(x_U)$. Consider the path of $D$ defined by $\pi = s,s_2,s_3,\dots$, where $s_i$ is the restriction of $s'_i$ to the variables from $v$. 


Then $D,\pi \models \phi(c_1 U c_2)$ since $x_U$ is true on the same places of path $\pi_U$ as the places of $\pi$ where $c_1 U c_2$ is true.
Hence, $D$ satisfies $E_f \phi(c_1 U c_2)$.

\vspace{0.2cm}
\noindent {\bf Rule \ref{rule6}}: \\
Denote $D = (v, init(v), next(v,v'), J)$.
Suppose by way of contradiction that a set $Clauses(D, A_f c )$
is satisfied by an interpretation $I$
but  the program $D $ does not satisfy the formula $ A_f c $. Then, there is a fair path $\pi = s_0,s_1,s_2,\dots$, such that $D, s_0 \models init(v)$ and $D,\pi \not \models c$.

By the clauses $init(v) \wedge \neg c(v)\-> I(p)(v) ,\text{ } next(v,v')\wedge I(p)(v) \-> I(p)(v')$ we know that for every $i:$ $s_i \models I(p)(v)$.

Since $\pi$ is fair, each $J_j$ holds infinitely often along $\pi$. Moreover, there is an infinite increasing sequence of indexes $0<i_1<i_2<i_3<,\dots$, such that for every $k:$ $s_{i_k}\models J_{k\%|J|+1}$.

By the clause
$I(p)(v_0)\wedge \bigwedge_{i=1}^{|J|} (I(t)(v_{i-1},v_{i})\wedge J_i(v_i)) \-> I(r)(v_0,v_{|J|})$ we get that $s_{i_{n_1|J|}}, s_{i_{n_2|J|}}\models r(v,v')$ for every pair of natural numbers $n_1,n_2$, provided $n_1< n_2$. Thus, $I(r)$, which is disjunctive well-founded, contains an infinite transitive chain, but that is impossible due to Theorem 1 from \cite{1319598}. A contradiction.

\vspace{0.2cm}
\noindent {\bf Rule \ref{rule7}}: \\
Suppose a set of clauses $Clauses(v, init(v), next(v,v'), J, E_f c )$ 
is satisfied by interpretation~$I$. 

Denote $D = (v, init(v), next(v,v'), J)$. In order to prove that $D$ satisfies $E_f c$ we need to show that every state that satisfies $init(v)$ also satisfies $c(v)$ and is a start of a fair path.
 
 Suppose for some state $s$ of $D$, $s\models init(v)$, then by clause $init(v)\-> c(v)\wedge I(q_1)(v)$ we have $s\models c(v)$ and $s\models I(q_1)(v)$.

 By clauses $\{  dwf(I(r_i)), \text{ }I(r_i)(v,v') \wedge I(r_i)(v',v'') \-> I(r_i)(v',v'') \; | \; i\leq |J| \}$ we know that for every $i\leq |J|$, relation $I(r_i)$ is disjunctively well-founded and transitive, hence it is a well-founded relation by Theorem 1 from \cite{1319598}.

 For every $i\leq |J|$, by clause $ I(q_i)(v) \-> \exists v': next(v,v')\wedge ((J_i(v')\wedge I(q_{(i\%|J|)+ 1})(v'))\vee(I(r_i)(v,v')\wedge I(q_i)(v'))) $ and the fact that $I(r_i)$ is a well-founded relation we can conclude that for every state $t$ of $D$ if $t\models I(q_i)(v)$ then there is a path in $D$ from $t$ to some state $t'$ such that $t'\models J_i(v) $ and there is a transition from $t'$ to some state $t''$ such that $t''\models I(q_{(i\%|J|)+ 1})(v)$. 

 Thus, there is a fair path, that starts in $s$. \qed
 
\end{proof}

\subsection{Full proof of relative completeness}\label{appen:complete}

Next we show that $\tool$ is relative complete. Relative completeness means that if a verification problem holds, that is, the program meets the specification, then every set of clauses, produced by $\tool$ for that problem, must be satisfied by some interpretation of the uninterpreted predicates. However, this interpretation is not necessarily expressible by a first-order formula in our logic.

\begin{theorem}[\textit{relative completeness}] 
    If $D = (v, init(v), next(v,v'), J )$ satisfies $\phi$ then every set produced by $\tool$ for $Clauses(D, \phi )$ is satisfiable by some interpretation of the uninterpreted predicate symbols.
\end{theorem}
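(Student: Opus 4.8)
The plan is to prove the statement by induction on the size of $\phi$, with a case analysis mirroring the eight translation rules, exactly as in the soundness proof (Theorem~\ref{soundness}) but run in the opposite direction: from the assumption $D \models \phi$ I must \emph{construct} interpretations of the fresh uninterpreted predicates introduced by each rule so that (i) these interpretations satisfy the newly generated clauses and (ii) the verification subproblem(s) produced by the rule still hold, which lets me invoke the induction hypothesis on them. Since $\tool$ is non-deterministic (Remark~\ref{remarkNonUnique}), the induction statement is ``for every verification problem $(D,\phi)$ with $D \models \phi$, \emph{every} clause set producible by $\tool$ for $(D,\phi)$ is satisfiable'', and each rule is shown to preserve this. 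The base case is Rule~\ref{rule8}: $D \models c$ means $init(v) \-> c(v)$ is valid, so $\{ init(v) \-> c(v) \}$ is vacuously satisfiable. For Rule~\ref{rule1} I interpret $aux$ as exactly the set of states satisfying $\phi_1$; then $(v, aux(v), next(v,v'), J) \models \phi_1$, and since replacing $\phi_1$ by an equivalent predicate in an NNF context preserves truth, $D \models \phi_2(aux)$, so the induction hypothesis yields satisfying interpretations of both sub-clause-sets, which agree on their only shared predicate $aux$ and hence combine. Rule~\ref{rule2} is immediate, as $Q\phi$ is insensitive to $J$, so $D \models Q\phi$ iff $(v, init(v), next(v,v'), \emptyset) \models Q_f\phi$.

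For Rules~\ref{rule3}--\ref{rule5}, universal case, I augment the program with the Boolean tracking variable ($x_X$, $x_G$, or $x_U$) whose update relation forces its value, along every path, to equal the truth value of the eliminated temporal subformula at that position; the extra fairness condition added for $G$ and $U$ is precisely what makes fair paths of the augmented program project onto fair paths of $D$ and conversely, so the augmented program satisfies $A_f\phi(x)$ and the induction hypothesis applies to its (smaller) clause set. In the existential case I additionally interpret $aux(v,x)$ as the set of pairs $(s,b)$ from which the augmented program has a fair path satisfying $\phi(x)$; then the augmented program with initial condition $\II(aux)$ satisfies $E_f\phi(x)$ (apply the induction hypothesis), while the clause $init(v) \-> \exists x : aux(v,x)$ holds because any initial state $s$ of $D$ starts a fair path $\pi$ witnessing $E_f\phi$, and taking $b$ to be the value of the temporal subformula at the head of $\pi$ gives $\II(aux)(s,b)$.

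The two genuinely new rules are \ref{rule6} and \ref{rule7}. For Rule~\ref{rule6} ($\phi = A_f c$), $D \models A_f c$ means every initial state either satisfies $c$ or starts no fair path; I interpret $p(v)$ as ``no fair path starts at $s$'' (closed under $next$-successors, containing all initial $\neg c$-states), $t(v,v')$ as the (one-or-more-step) transitive closure of $next$, and $r(v,v')$ as the set of pairs forced by the body of the $r$-clause, i.e.\ $s \models \II(p)$ together with a path from $s$ to $s'$ hitting states satisfying $J_1,\dots,J_{|J|}$ in order with $s' \models J_{|J|}$. All clauses then hold by construction, and $\II(r)$ is well-founded (hence disjunctively well-founded) because an infinite $\II(r)$-chain splices into an infinite path from a $\II(p)$-state meeting every $J_i$ infinitely often, i.e.\ a fair path from a state with no fair path --- absurd; the $J=\emptyset$ sub-case is identical with $r$ = reachability between $\II(p)$-states. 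For Rule~\ref{rule7} ($\phi = E_f c$), $D \models E_f c$ means every initial state satisfies $c$ and is \emph{fair} (starts a fair path); I interpret each $q_i(v)$ as ``$s$ is fair'' and each $r_i(v,v')$ as a distance-decreasing relation: fixing a shortest path from each non-$J_i$ fair state to the next fair state satisfying $J_i$, relate $s$ to $s'$ exactly when the remaining distance strictly drops. This makes $r_i$ transitive and well-founded, so $dwf(r_i)$ holds, and the clause $q_i(v) \-> \exists v' : next(v,v') \wedge ((J_i(v) \wedge q_{(i\%|J|)+1}(v')) \vee (r_i(v,v') \wedge q_i(v')))$ is witnessed by the next state on a fair path through $s$ (first disjunct when $s\models J_i$, second otherwise); the $J=\emptyset$ sub-case takes $q$ = ``lies on an infinite path''.

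The main obstacle is the existential fragment --- the $E_f$ branches of Rules~\ref{rule3}--\ref{rule5} and especially Rule~\ref{rule7} --- where one must exhibit a \emph{semantic} interpretation (not assumed first-order definable, which is exactly why only \emph{relative} completeness is claimed) for which a Skolem-style choice of the existential witness is simultaneously consistent with the required well-foundedness of the $r_i$. Pinning down the definition of $r_i$ so that both the $next$-successor clause and $dwf(r_i)$ are satisfied --- i.e.\ measuring progress toward the next fairness witness along a \emph{consistently chosen} family of shortest paths --- is the delicate point; everything else is routine bookkeeping about the path correspondence between a program and its Boolean-augmented variant, together with the trivial observation that a well-founded transitive relation is disjunctively well-founded (whereas soundness uses the non-trivial converse, Theorem~1 of \cite{1319598}).
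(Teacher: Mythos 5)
Your proposal is correct and follows essentially the same route as the paper's proof: induction on the size of $\phi$, with the same semantic interpretations for each rule ($aux$ as the exact satisfaction set for Rule~\ref{rule1} and the existential cases of Rules~\ref{rule3}--\ref{rule5}, $p$/$t$/$r$ as ``no fair path''/reachability/ordered-$J_i$-visiting pairs for Rule~\ref{rule6}, and $q_i$ as fairness with $r_i$ as a shortest-path progress measure for Rule~\ref{rule7}). Your closing observation that completeness only needs the trivial direction (well-founded transitive implies disjunctively well-founded) while soundness needs Theorem~1 of \cite{1319598} is accurate and a nice touch, as is your explicit handling of the transitivity of $r_i$ via a strictly decreasing distance, which the paper leaves somewhat implicit.
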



\begin{proof}
We prove the theorem by induction on the size of the verified formula.
\noindent {\bf Base}:
Rule \ref{rule8} forms the basis of the induction.
For assertion $c$ in $\TT$,  if $D \models c$ then 
$Clauses(D, c )= \{ init(v) \-> c(v) \}$ is satisfiable  by definition of CTL$^*$ satisfiability.
\vspace{0.2cm}

\noindent {\bf Induction step}:\\
{\bf Rule \ref{rule1}}:\\   
Assume $D \models \phi_2(\phi_1)$.
For the uninterpreted predicate $aux$, we choose as interpretation the predicate $\II(aux)$, which is true in exactly those states where $\phi_1$ is true.
Then, $(v, \II(aux)(v), next(v,v'), J)$ meets $\phi_1$ and $(v, init(v), next(v,v'), J)$ meets $\phi_2(\II(aux))$. By the induction hypothesis, the corresponding sets of clauses are satisfiable. Consequently, $Clauses(D, \phi_1(\phi_2))$ is satisfiable.

\vspace{0.2cm}
\noindent{\bf Rule~\ref{rule2}}:\\
Since this verification problem does not depend on the fairness conditions, the following holds: $(v, init(v), next(v,v'), J) \models Q\phi$ \  iff \ $(v, init(v), next(v,v'), \emptyset) \models Q\phi$ iff $(v, init(v), next(v,v'), \emptyset) \models Q_f\phi$.
By the induction hypothesis (since we consider $Q_f\phi$ to be smaller than $Q\phi$), every set of clauses produced by $\tool$ for $Clauses(v, init(v), next(v,v'), \emptyset, Q_f\phi)$ is satisfiable. The same holds for $Clauses(v, init(v), next(v,v'), \JJ, Q\phi)$, since by Rule~\ref{rule2} is identical.

\vspace{0.2cm}
\noindent We partition the proof for each of the rules \ref{rule3}, \ref{rule4} and \ref{rule5} into two cases --  universal and existential quantifiers.

\vspace{0.2cm}
\noindent {\bf Rule \ref{rule3}, Universal}: \\
If $D= (v, init(v), next(v,v'), J)$ meets $A_f\phi(Xc)$, then $D_X=(v\cup \{ x_{X} \}, init(v), next(v,v')\wedge (x_{X} = c(v')), J)$ meets $A_f \phi(x_{X})$. This is because for every fair path of $D_X$, the formula $x_{X}$ is true in the exact the same places where $Xc$ is true on the corresponding path of $D$. By the induction hypothesis, every set of clauses produced by $\tool$ for $Clauses(D_X, A_f \phi(x_{X}))$ is satisfiable. Consequently,  every set of clauses produced by $\tool$ for $Clauses(D, A_f\phi(Xc))$ is satisfiable, as required.

\vspace{0.2cm}
\noindent {\bf Rule \ref{rule3}, Existential}: \\
Assume $D=(v , init(v), next(v,v'), J)$ satisfies $E_f \phi(Xc)$. We need to prove that every set of clauses produce by $\tool$ for 
$Clauses(D, E_f\phi(Xc)) =Clauses(v\cup \{ x_{X} \}, aux(v, x_X), next(v,v')\wedge (x_{X} = c(v')), J,  E_f \phi(x_{X})) \cup\{ init(v) \-> \exists x_X : aux(v,x_X) \} $ is satisfiable.

For $D_X=(v\cup \{ x_{X} \}, \II(aux)(v), next(v,v')\wedge (x_{X} = c(v')), J)$, we choose the predicate
     $\II(aux)$, which  is true in exactly those states of $D_X$ that satisfy $E_f \phi(x_{X})$. 
     
     Clearly, $D_X$ satisfies $E_f \phi(x_{X})$. By the induction hypothesis, every set of clauses produced by $\tool$ for $Clauses(D_X,  E_f \phi(x_{X}))$ is satisfiable. 

 For every fair path of $D_X$, the variable $x_{X}$ is true in exactly the same places where $Xc$ is true on the corresponding fair path of $D$.

Hence, $init(v) \-> \exists x_X  I(aux)(v,x_X)$, because if $s\models init(v)$ is true for some state $s$ of $D$, then there is a fair path $\pi$, starting at $s$ that satisfies $\phi(Xc)$. Hence, if $Xc$ is true on $\pi$, then $s\models I(aux)(v,1)$ and if $Xc$ is false on $\pi$ then $s\models I(aux)(v,0)$.

Thus, for every set of clauses produced by $\tool$ for   $Clauses(v\cup \{ x_{X} \}, aux(v, x_X), next(v,v')\wedge (x_{X} = c(v')), J,  E_f \phi(x_{X})) \cup \{ init(v) \-> \exists x_X : aux(v,x_X) \}$, we defined an interpretation that satisfy this set.

\vspace{0.2cm}
\noindent The two cases for rules~\ref{rule4} and~\ref{rule5}  can be proved similarly to Rule~\ref{rule3} and are omitted.

\vspace{0.2cm}
\noindent {\bf Rule \ref{rule6}}:\\
Suppose the program $D= (v, init(v), next(v,v'), J)$ satisfies $A_f c$. Recall that, $D \models A_f c$ implies that every initial state of $D$ either satisfies $c$ or has no fair path starting in it. We define an interpretation $\II$  as follows:
\begin{itemize}
\item 
    $s\models \II(p)(v)$ iff there is no fair path starting at $s$.
\item 
    $s,s'\models \II(t)(v,v')$ iff there is a path from $s$ to $s'$.
\item 
    $s,s'\models \II(r)(v,v')$ if $s\models \II(p)(v)$, $s'\models J_{|J|}(v')$ and there is a path from $s$ to $s'$ which visits a state  satisfying $J_1$, then a state satisfying $J_2$ and so on until it ends in $s'$ that satisfies $J_{|J|}$.
\end{itemize}
For every set of state $s,s',s'',s_0,\dots,s_{|J|}$  of $D$,  we check that all clauses defined by Rule~\ref{rule6} are satisfied by $\II$.
\begin{itemize}
\item
    $s\models init(v) \wedge \neg c(v)\-> \II(p)(v)$, because if an initial state does not satisfy $c$, then there is no fair path that starts in it. This is because $D$ satisfies $A_f c$.
\item
    $s,s'\models  next(v,v')\wedge \II(p)(v) \-> \II(p)(v')$, because if there is no fair path that starts in $s$ and there is a transition from $s$ to $s'$, then there is no fair path that starts in $s'$.
\item
    $s_0,\dots,s_{|J|} \models I(p)(v_0)\wedge \bigwedge_{i=1}^{|J|} (t(v_{i-1},v_{i})\wedge J_i(v_i)) \-> \II(r)(v_0,v_{|J|})$ by definition of $\II(r)$.
\item
    $s,s' \models next(v,v) \-> \II(t)(v,v')$ and $s,s',s'' \models next(v,v')
    \wedge \II(t)(v',v'')\-> \II(t)(v,v'') $, since $\II(t)$ is the transitive closure of $next$.  
\item    
    Relation $\II(r)$ is well-founded, since if there was an infinite chain of $\II(r)$, then this chain would represent a fair path on states satisfying $\II(p)$. But such states cannot occur on fair paths. Hence $dwf(\II(r))$. 
\end{itemize}
This concludes the proof of relative completeness for Rule~\ref{rule6}.

\vspace{0.2cm}
\noindent {\bf Rule \ref{rule7}}:\\
Suppose the program $D= (v, init(v), next(v,v'), J)$ satisfies $E_f c$. This means that every initial state $s$ of $D$ satisfies $c$, and, in addition, $s$ is the start of a fair path. We say that a state $s$ of $D$ is \emph{fair} if there is a fair path starting at~$s$.
We define the interpretation $\II$ as follows. For every $i \in \{1, \ldots, |J| \}$:

\begin{itemize}
 \item
$s\models \II(q_i)(v)$ iff $s$ is a fair state.
\item
$s,s' \models \II(r_i)(v,v')$ iff $s'$ is a fair state and $s$ satisfies $J_i$ or $s'$ is the successor of $s$ on the shortest
path $\pi$ leading from $s$ to a fair state $s''$  satisfying $J_i$.

\end{itemize}
For every two states $s, s'$ in $D$, we check that the clauses defined by Rule~\ref{rule7} are satisfied by $\II$.
\begin{itemize}
\item
    $s\models init(v)\-> c(v)\wedge \II(q_1)(v)$ since for every initial state $s$, $c(v)$ holds in $s$ and there is a fair path starting at $s$. That is, $s$ is fair.
\item
    $s\models  \II(q_i)(v) \-> \exists v': next(v,v')\wedge ((J_i(v)\wedge \II(q_{(i\% |J|) + 1})(v'))\vee(\II(r_i)(v,v')\wedge \II(q_i)(v')))$. This is because if $s\models \II(q_i)(v)$, then $s$ is a fair state. On a fair path every $J_j$ holds infinitely often. Thus, either $J_i(v)$ holds in $s$ or there is a shortest path $\pi$ from $s$ to a fair state where $J_i$ holds.
    
    In the first case we define $s'$ as the successor of $s$ on a fair path. In the second case we define $s'$ as the successor of $s$ on $\pi$.  Hence, we either have $s\models J_i(v)$ and $s'\models \II(q_{(i\% |J|) + 1})(v')$ (this is because fair states satisfy all $q_j$) 
    or $s'\models \II(q_i)(v)$ and $s,s'\models \II(r_i)(v,v')$.
  \item  
    $s\models \II(r_i)(v,v') \wedge \II(r_i)(v',v'') \-> \II(r_i)(v,v'')$, since the relation 
    is transitive by definition.
\item
    $\II(r_i)$ is a well-founded relation because if on each step we move to a state, which is closer along the shortest path to a state satisfying $(J_i \wedge q_{(i\% |J|)+1})$, we will eventually get to such a state. Hence $dwf(\II(r_i))$. 
 \end{itemize}  
This concludes the proof of relative completeness for Rule~\ref{rule7}. \qed
\end{proof}

\subsection{Case study 1: CTL$^*$ verification of user-robots system}
\label{app:robot}
Let us demonstrate how we generate the set of EHCs for the verification problem of $R \models \phi$ using \tool.

We need to use rule~\ref{rule1} four times, one for each basic state subformula. As the result, we get the following

$$Clauses(R,\phi) = Clauses(v,aux_1(v),next(v,v'), \emptyset, AG(safe(v)))\cup $$ $$\cup Clauses(v,aux_2(v),next(v,v'), \emptyset, EG(Fmeet_{1,2}(v))) \cup \ $$ $$ \cup Clauses(v,aux_3(v),next(v,v'), \emptyset, EG(Fmeet_{1,3}(v))) \cup $$ $$ \cup Clauses(v,aux_4(v),next(v,v'), \emptyset, EG(Fmeet_{2,3}(v))) \cup$$ $$   \cup Clauses(R, aux_1(v) \wedge aux_2(v) \wedge aux_3(v) \wedge aux_4(v) )$$

The first set requires that in every state where $aux_1(v)$ is true, subformula $AG(safe(v))$ is satisfied. Similar requirements are presented in the next three sets.
The last set guarantees that in the initial states of $R$ all four $aux_i$ predicates are satisfied. 

To deal with the first set we can use rule~\ref{rule2} to turn $A$ to $A_f$ and then rule~\ref{rule4} to eliminate the temporal operator. Finally, we get the following result.
$$Clauses(v,aux_1(v),next(v,v'), \emptyset, AG(safe(v))) = $$ $$ =  Clauses(v\cup \{ x_{G} \}, aux_1(v), next(v,v')\wedge (x_{G} = safe(v)\wedge x_G'),$$

$$ \{ x_G\vee \neg safe(v) \}, A_f x_{G}).  $$

Now we need to use rule~\ref{rule6} to get the explicit set of clauses.

$$Clauses(v\cup \{ x_{G} \}, aux_1(v), next(v,v')\wedge (x_{G} = safe(v)\wedge x_G'), \{ x_G\vee \neg safe(v) \}, A_f x_{G}) = $$
$$=  \{ aux_1(v) \wedge \neg x_G\-> p(v,x_G) ,\text{ } next(v,v')\wedge (x_{G} = safe(v)\wedge x_G') \wedge p(v,x_G) \-> p(v',x_G'),$$
$$p(v,x_G)\wedge t(v,x_G,v',x_G')\wedge (x_G'\vee \neg safe(v')) \-> r(v,x_G,v',x_G'),$$
$$ dwf(r), \ next(v,v') \wedge (x_{G} = safe(v)\wedge x_G') \-> t(v,x_G,v',x_G'), $$ $$ t(v,x_G,v',x_G')\wedge next(v',v'') \wedge (x_{G}' = safe(v')\wedge x_G'')\-> t(v,x_G,v'',x_G'')\}.$$

The resulting set of clauses guarantees that $aux_1(v)$ is evaluated in such a way that from every state in which $aux_1(v)$ is true, the user cannot direct three robots to the same point at the same time (since $safe(V)$ holds). 

Next, we explain informally why the set of clauses above is satisfiable.
Suppose from a state $s$ the user can apply a sequence of moves so that eventually all three robots arrive at position $(x;y)$. Hence $y=s(y_1)$, since the second coordination of robot 1 cannot be changed. For the second and third robots the sums $2x_2 + y_2$ and $x_3+y_3$ cannot be changed. Hence $$2x+s(y_1) = 2x+y = 2s(x_2)+s(y_2),$$ $$x = ( 2s(x_2)+s(y_2) - s(y_1))/2$$ and from the condition for the last robot, we need to have $ ( 2s(x_2)+s(y_2) - s(y_1))/2 + s(y_1) = s(x_3) + s(y_3)$. By these equations, we can conclude that $aux_1(v)$ can be evaluated as $$(( 2x_2+y_2 + y_1)/2)\neq x_3 + y_3$$
and $init(v)$ implies this condition.

For the second, third and fourth sets the generation of clauses in similar. We will look only at the fourth set.

Let us rewrite the formula to replace $F$ with $U$.
$$EG(Fmeet_{2,3}(v)) = EG(True U meet_{2,3}(v))$$
First, we need to apply rule~\ref{rule2} to turn
$E$ to $E_f$. Then we need to apply rule~\ref{rule5} to eliminate the temporal operator $U$.
$$Clauses(v,aux_4(v),next(v,v'), \emptyset, EG(True U meet_{2,3}(v))) = $$ $$ = \{ aux_4(v) \-> \exists x_U : aux_5(v,x_U) \} \cup$$ $$ \cup Clauses(v\cup \{ x_{G} \},aux_5(v,x_U),next(v,v') \wedge (x_{U} = meet_{2,3}(v) \vee x_U',$$ $$ \{ \neg x_U \vee meet_{2,3}(v) \}, EG(X_U))$$

As the last two steps, we need to apply rule~\ref{rule4} to eliminate the temporal operator $G$ and rule~\ref{rule7} to get the explicit set of clauses.

\subsection{Soundness and completeness of $CTL^*$ synthesis}\label{append:SyntProofs}
\begin{theorem}[soundness]\label{app:synthlemma}
    Suppose for a given $CTL^*$ synthesis problem $\langle P,H,\phi \rangle$, a set of clauses $\Delta(P,H,\phi)$ produced by $\tool$ is satisfiable by interpretation $\II$. Let $\Psi$ be a resolving function for $\langle P,H \rangle$ defined by:  

    For every condition hole $\langle l,l_t,l_f \rangle \in H$, \ 
    $\Psi(\langle l,l_t,l_f \rangle)(v_r) = \II(u^c_l)(v_r)$.

    For every assertion hole $\langle l,l' \rangle \in H$, \ 
    $\Psi(\langle l,l' \rangle)(v_r,v_r') = \II(u^a_{l})(v_r,v_r')$.

\noindent Then $P_\Psi$ satisfies $\phi$.
\end{theorem}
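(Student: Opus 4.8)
The plan is to reduce the statement to the soundness of the verification translation, Theorem~\ref{soundness}, by showing that instantiating the hole predicates of $P_U$ according to $\II$ yields exactly the synthesized program $P_\Psi$.

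First I would check that $\Psi$ is a genuine resolving function. The only condition to verify is that for every assignment hole $\langle l,l'\rangle\in H$ we have $\forall v_r\,\exists v_r':\Psi(\langle l,l'\rangle)(v_r,v_r')$. Since $\Psi(\langle l,l'\rangle)=\II(u^a_l)$ and $\II$ satisfies $\Delta_a(P,H)=\{\top\-> \exists v_r':u^a_l(v_r,v_r')\mid \langle l,l'\rangle\in H\}$, this holds immediately.

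Next I would decompose $\II$ into $\II_1$, its restriction to the fresh hole predicates $U=\{u^c_l\}\cup\{u^a_l\}$, and $\II_2$, its restriction to every other uninterpreted predicate (those introduced by $\tool$ while building $Clauses(P_U,\phi)$). Let $\II_1(next_U)$ denote the transition formula obtained from $next_U$ by interpreting each $u^c_l,u^a_l$ according to $\II_1$. Comparing this disjunct by disjunct with $next_\Psi$ — and using $\Psi(\langle l,l_t,l_f\rangle)=\II(u^c_l)$, $\Psi(\langle l,l'\rangle)=\II(u^a_l)$ — gives $\II_1(next_U)(v,v')=next_\Psi(v,v')$, so the program $\II_1(P_U)$ obtained from $P_U$ by interpreting its transition relation equals $P_\Psi$ (they already share the same variables, initial condition, and fairness set). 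Because $\tool$ treats the transition relation only as an opaque formula, substituting $\II_1$ into the predicates occurring in $Clauses(P_U,\phi)$ commutes with the construction, i.e. $\II_1(Clauses(P_U,\phi))=Clauses(\II_1(P_U),\phi)=Clauses(P_\Psi,\phi)$.

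Finally, since $\Delta(P,H,\phi)=\Delta_a(P,H)\cup Clauses(P_U,\phi)$ is satisfied by $\II=\II_1\cup\II_2$, in particular $Clauses(P_U,\phi)$ is satisfied by $\II$; hence $Clauses(P_\Psi,\phi)=\II_1(Clauses(P_U,\phi))$ is satisfied by the remaining part $\II_2$. Applying Theorem~\ref{soundness} to $Clauses(P_\Psi,\phi)$ we conclude that $P_\Psi\models\phi$. The step that needs the most care is the bookkeeping identity $\II_1(P_U)=P_\Psi$ together with the claim that interpretation substitution commutes with $Clauses(\cdot,\phi)$: one must match every disjunct of $next_U$ (including the negated-condition disjuncts $\neg u^c_l$ against $\neg\Psi(\langle l,l_t,l_f\rangle)$) with the corresponding disjunct of $next_\Psi$, and check that the fresh predicates created by $\tool$ are disjoint from $U$, so that the split $\II=\II_1\cup\II_2$ is well defined.
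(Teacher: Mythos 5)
Your proposal is correct and follows essentially the same route as the paper's own proof: verifying that $\Psi$ is a resolving function via $\Delta_a(P,H)$, splitting $\II$ into $\II_1$ on the hole predicates and $\II_2$ on the rest, establishing $\II_1(P_U)=P_\Psi$ and $\II_1(Clauses(P_U,\phi))=Clauses(P_\Psi,\phi)$, and then invoking Theorem~\ref{soundness}. Your explicit remarks about matching the negated-condition disjuncts and the disjointness of the fresh predicates from $U$ are sound points of care that the paper leaves implicit.
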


\begin{proof}
    First, we check that $\Psi$ is a resolving function. That is,  for every assignment hole $\langle l,l' \rangle \in H$, the formula 
    $\forall v_r \exists v_r': \Psi(\langle l,l' \rangle) (v_r,v_r') $ is true.
    But this follows from the fact that interpretation $I$ satisfies clauses $\Delta_a(P,H)$.

Next, we partition $I$ to $I_1$ and $I_2$, where $I_1$ is a restriction of $I$ to the set of predicates $U$ and $I_2$ is restricted to the predicates not in $U$.

We denote by $I_1(next_U)$ the formula $next_U$, in which every predicate $u^c_l$ or $u^a_l$ is interpreted according to $I_1$. Note that, by definition of $next_U,next_\Psi$ and $\Psi$, we get $I_1(next_U)(v,v')=next_\Psi(v,v')$.
Similarly, we denote by $I_1(P_U)$ the program $P_U$, in which the transition relation $next_U$ is replaced by its interpreted version $I_1(next_U)$. Note that, $I_1(P_U) = P_\Psi$. 

Applying $I_1$ to the predicates inside $Clauses(P_U,\phi)$ we get $$I_1(Clauses(P_U,\phi)) = Clauses(I_1(P_U),\phi) = Clauses (P_\Psi,\phi).$$ 
Since the set $Clauses(P_U,\phi)$ is satisfied by interpretation $I$ and $I=I_1\cup I_2 $, then $Clauses (P_\Psi,\phi) = I_1(Clauses(P_U,\phi))$ is satisfied by $I_2$. Thus, by Theorem~\ref{soundness} the program $P_\Psi$ satisfies $\phi$, as required. 
\qed

\end{proof}

\begin{theorem}[Relative completeness] \label{app:syncompleteness}
    Given a realizable $CTL^*$ synthesis problem $\langle P,H,\phi \rangle$, then every set of clauses $\Delta(P,H,\phi)$ produced by $\tool$ is satisfiable.  
\end{theorem}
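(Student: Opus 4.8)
The plan is to mirror the proof of Theorem~\ref{synthlemma} (soundness of synthesis), but run in the opposite direction, using the relative completeness of the \emph{verification} translation (Theorem~\ref{completeness}) as a black box. Assume $\langle P,H,\phi\rangle$ is realizable, witnessed by a resolving function $\Psi$ with $P_\Psi\models\phi$. I will build a satisfying interpretation $\II$ for $\Delta(P,H,\phi)=\Delta_a(P,H)\cup Clauses(P_U,\phi)$ by splitting it into $\II_1$, interpreting the hole predicates in $U=\{u^c_l\}\cup\{u^a_l\}$, and $\II_2$, interpreting the remaining uninterpreted predicates that $\tool$ freshly introduces while producing $Clauses(P_U,\phi)$.

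First I would define $\II_1$ directly from $\Psi$: for every condition hole $\langle l,l_t,l_f\rangle\in H$ put $\II_1(u^c_l)(v_r):=\Psi(\langle l,l_t,l_f\rangle)(v_r)$, and for every assignment hole $\langle l,l'\rangle\in H$ put $\II_1(u^a_l)(v_r,v_r'):=\Psi(\langle l,l'\rangle)(v_r,v_r')$. By the definitions of $next_U$, $next_\Psi$ and $\Psi$, interpreting the hole predicates inside $next_U$ according to $\II_1$ yields exactly $next_\Psi$; hence the program obtained from $P_U$ by this partial interpretation is precisely $P_\Psi$. The key structural step — the one I expect to require the most care — is that partial interpretation commutes with the translation, i.e. $\II_1(Clauses(P_U,\phi))=Clauses(\II_1(P_U),\phi)=Clauses(P_\Psi,\phi)$, for the same choices made at the non-deterministic points of Remark~\ref{remarkNonUnique}. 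This holds because none of the eight rules of $\tool$ inspects the internal syntactic structure of the transition relation: each rule either conjoins new constraints to $next$, copies $next$ verbatim into a subproblem, or uses $next$ as an opaque atom inside a clause, so substituting interpreted relations for the hole predicates before or after translation gives the same clause set.

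Next, since $P_\Psi\models\phi$, Theorem~\ref{completeness} guarantees that every set produced by $\tool$ for $Clauses(P_\Psi,\phi)$ — in particular $\II_1(Clauses(P_U,\phi))$ — is satisfiable by some interpretation $\II_2$ of its uninterpreted predicates, which are exactly the non-hole predicates introduced by the rules. As the domains of $\II_1$ and $\II_2$ are disjoint, $\II:=\II_1\cup\II_2$ is well defined and satisfies $Clauses(P_U,\phi)$. Finally, $\Delta_a(P,H)=\{\top\->\exists v_r' : u^a_l(v_r,v_r')\mid\langle l,l'\rangle\in H\}$ is satisfied by $\II_1$, hence by $\II$, since this is precisely the requirement that $\Psi$ is a resolving function, namely $\forall v_r\exists v_r' : \Psi(\langle l,l'\rangle)(v_r,v_r')$. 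Thus $\II$ satisfies all of $\Delta(P,H,\phi)$, and the proof is complete.

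The main obstacle is the commutation claim $\II_1(Clauses(P_U,\phi))=Clauses(P_\Psi,\phi)$; once it is in hand, everything else is bookkeeping identical in spirit to the proof of Theorem~\ref{synthlemma}. To make it rigorous I would isolate it as a short lemma: for any program $D$ whose transition relation mentions a set $W$ of uninterpreted predicates disjoint from those $\tool$ introduces, and any interpretation $\mathcal{K}$ of $W$, applying $\mathcal{K}$ to $Clauses(D,\phi)$ yields $Clauses(\mathcal{K}(D),\phi)$ — proved by induction on the derivation of $\tool$, inspecting each of the eight rules.
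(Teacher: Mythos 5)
Your proposal is correct and follows essentially the same route as the paper's own proof: define $\II_1$ from $\Psi$ on the hole predicates, observe $\II_1(P_U)=P_\Psi$, invoke Theorem~\ref{completeness} on $Clauses(P_\Psi,\phi)$ to obtain $\II_2$, and check $\Delta_a(P,H)$ separately via the resolving-function condition. The only difference is that you explicitly isolate the commutation claim $\II_1(Clauses(P_U,\phi))=Clauses(\II_1(P_U),\phi)$ as a lemma to be proved by rule induction, whereas the paper asserts this identity without further justification --- a welcome extra degree of rigor, not a deviation in approach.
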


\begin{proof}
    Suppose $\langle P,H,\phi \rangle$ is realizable by resolving function $\Psi$.

    \noindent As before, we partition the satisfying interpretation $I$ that we would like to define into two interpretations, $I_1$ and $I_2$, where the first   interprets the predicates from $U$ and the second interprets the rest.
    We define interpretation $I_1$ on $U$ as follows:

    For every condition hole $\langle l,l_t,l_f \rangle \in H$, \  
    $I_1(u^c_l)(v_r):= \Psi(\langle l,l_t,l_f \rangle)(v_r)$.

    For every assertion hole $\langle l,l' \rangle \in H$, \ 
    $I_1(u^a_{l})(v_r,v_r'):= \Psi(\langle l,l' \rangle)(v_r,v_r')$.

    We adopt the notations of $I_1(next_U), I_1(P_U)$ and $I_1(Clauses(P_U,\phi))$ from the proof of Lemma \ref{synthlemma}.  Note that, $I_1(next_U)(v,v')=next_\Psi(v,v')$, hence $P_\Psi = I_1(P_U)$. Thus, by Theorem~\ref{completeness} the set $Clauses(P_\Psi, \phi) = Clauses(I_1(P_U),\phi) = I_1(Clauses(P_U,\phi))$ is satisfiable by some interpretation $I_2$, since $P_\Psi$ satisfies $\phi$. Hence, $I:= I_1\cup I_2$ is a satisfying interpretation for $Clauses(P_U,\phi)$.

    Since the formula
    $\forall v_r \exists v_r': \Psi(\langle l,l' \rangle) (v_r,v_r') $
    is true, then $\Delta_a(P,H)$ is satisfied by $I_1$, and therefore also by $I$. Thus, $\Delta(P,H,\phi)$ is satisfied by $I$.\qed
    
\end{proof}


\end{document}